\newcommand{\prlsection}[1]{\paragraph*{#1.---}\hspace{-1em}}
\def\maketitle{
\@author@finish
\title@column\titleblock@produce
\suppressfloats[t]}
\tikzset{
box/.style={rectangle,draw,minimum height=1cm,minimum width=2cm,fill=gray!15},
invis/.style={rectangle,minimum height=1cm,minimum width=2cm}
}
\newcommand{\ie}{i.e.~}
\newcommand{\eg}{e.g.~}
\theoremstyle{plain}
\newtheorem{theorem}             {Theorem}
\newtheorem{lemma}      [theorem]{Lemma}
\newtheorem*{theorem*}    {Theorem}
\newtheorem*{proposition*}{Proposition}
\newtheorem*{lemma*}      {Lemma}
\newtheorem*{corollary*}  {Corollary}
\newtheorem*{conjecture*} {Conjecture}
\theoremstyle{definition}
\newtheorem{definition}[theorem]{Definition}
\newtheorem*{definition*}{Definition}
\newtheorem*{example*}   {Example}
\theoremstyle{remark}
\newtheorem{remark}[theorem]{Remark}
\newcommand{\defeq}{\colonequals} 
\newcommand{\tuple}[1]{\mathopen{\langle}#1\mathclose{\rangle}} 
\newcommand{\setdef}  [2]{\left\{#1 \mid #2\right\}}             
\newcommand{\enset}   [1]{\mathopen{ \{ }#1\mathclose{ \} }} 
\newcommand{\fdef}    [3]{#1 \colon\colon #2 \longmapsto     #3}
\newcommand{\MP}{\mathsf{MP}}
\newcommand{\NCF}{\mathsf{NCF}}
\newcommand{\x}{\bar{x}}
\newcommand{\y}{\bar{y}}
\newcommand{\simulates}{\rightsquigarrow}
\newcommand{\X}{\mathcal{X}}
\newcommand{\Ev}{\mathcal{E}}
\newcommand{\scen}[1]{\tuple{X_#1,\Sigma_#1,O_#1}}
\begin{document}

\title{Neither Contextuality nor Nonlocality Admits Catalysts}
\author{Martti Karvonen}
\affiliation{University of Ottawa, Canada}

\begin{abstract}
We show that the resource theory of contextuality does not admit catalysts, i.e., there are no correlations that can enable an otherwise impossible resource conversion and still be recovered afterward. As a corollary, we observe that the same holds for nonlocality. As entanglement allows for catalysts, this adds a further example to the list of ``anomalies of entanglement,'' showing that nonlocality and entanglement behave differently as resources. We also show that catalysis remains impossible even if, instead of classical randomness, we allow some more powerful behaviors to be used freely in the free transformations of the resource theory.
\end{abstract}

\maketitle

\prlsection{Introduction}Contextuality~\cite{ks,Contextualityreview} and nonlocality~\cite{bell1966,bell-review} play a prominent role in a wide variety of applications of quantum mechanics, with nonlocality being used, for example, in quantum key-distribution~\cite{scarani2009security}, certified randomness~\cite{acin2016certified} and randomness expansion~\cite{fehr2013security}. Similarly, contextuality powers quantum computation in some computational models~\cite{Anders2009,raussendorf2013contextuality,howard2014contextuality,bermejo2017contextuality,raussendorf2017contextuality,abramsky2017contextual} and even increases expressive power in quantum machine learning~\cite{quantumML}. Consequently, it is vital to understand how nonlocality and contextuality behave as resources. 

In this Letter, we show that neither contextuality nor nonlocality admits catalysts: that is, there are no correlations that can be used to enable an otherwise impossible conversion between correlations and still be recovered afterward. Slightly more precisely, let us write $d,e,f\dots$ for various correlations (whether classical or not), $d\otimes e$ for having independent instances of $d$ and $e$, and  $d\simulates e$ (read as ``$d$ simulates $e$'') for the existence of a conversion $d\to e$. Then our results state that, in suitably formalized resource theories of contextuality and nonlocality, whenever $d\otimes e\simulates d\otimes f$, then $e\simulates f$ already. This gives a strong indication that contextuality (and nonlocality) are resources that \emph{get spent} when you use them: there is no way of using a correlation $d$ to achieve a task you could not do otherwise while keeping $d$ intact.  As entanglement theory famously allows for catalysts~\cite{catalysisofentanglement}, this can be seen as yet another  ``anomaly of nonlocality''~\cite{anomalyofnon-locality} and thus further testament to the fact that nonlocality and entanglement are different resources. 

We prove our results by working in precisely defined resource theories of contextuality and nonlocality. These are not strictly speaking quantum resource theories~\cite{chitambar2019resource}, but resource theories in a more general sense~\cite{coecke2016mathematical,fritz2017resource}, as we allow resources such as  Popescu-Rohrlich boxes (PR boxes)~\cite{popescu1994quantum} that are not quantum realizable. The kinds of conversions between correlations we have in mind capture the intuitive idea of using one system to \emph{simulate} another one, and have been studied in earlier literature~\cite{barrettetal2005,barrettpironio2005,allcock2009closedsets,jonesmasanes2005interconversions,dupuis2007nouniversalbox,vicente:nonlocality,forsterwolf2011bipartite,gallego2012operational,Gallego:wirings}. These roughly correspond to the local operations and shared randomness paradigm (LOSR) or to wirings and prior-to-input classical communication, depending on the precise definitions of these terms. However, existing formalizations of these in the literature are often limited to the bipartite or tripartite settings, and at times overlook some technical issues resulting in nonconvex sets of transformations~\cite[Appendix]{Wolfeetal:quantifyingbell}.

More importantly, existing formalizations of the resource theory of nonlocality tend to focus on the case where each party has a discrete set of measurements of which they can perform at most one. However, this is false even in relatively simple situations: for instance, if Alice shares one PR box with Bob and one with Charlie, then she has four measurements available but is not restricted to only one measurement as she can choose a measurement for each of her boxes. In particular, she might first measure one of the boxes and use the outcome (and possible auxiliary randomness) to choose what to measure next. 

To overcome such issues, we work in the general approach to contextuality initiated in~\cite{ab} and later extended in~\cite{comonadicview} to capture building some correlations from others using such \emph{probabilistic} and \emph{adaptive} means, resulting in a resource theory for contextuality. We then obtain the resource theory of nonlocality from this via a general mathematical construction used in~\cite{crypto} that builds a resource theory of $n$-partite resources from a given resource theory. Working with such generality clarifies the relationship between resource theories of contextuality and nonlocality and captures the kinds of interconversions studied in earlier literature in the exact, single-shot regime~\footnote{This is in contrast to asymptotic questions~\cite[V.B]{chitambar2019resource} or distillation tasks~\cite{Brunner:distillation} where one tries to approximate a target resource using increasing numbers of copies of the starting resource. It is unclear if the notion of a catalyst makes sense in such settings, which is why we work with single-shot convertibility.} in a precise yet tractable manner.

We believe that our result could be phrased and proved in terms other approaches to contextuality~\cite{spekkens2005contextuality,ehtibar2014contextuality,csw2014graphtheoretic,acin2015combinatorial,sander2015effect}, as long as one formalizes such adaptive measurement protocols and transformations between correlations within them. However, the current proof strategy no longer applies if one works with axiomatically defined transformations as in~\cite{joshi2011nobroadcasting}, \ie with abstract functions between sets of correlations satisfying properties such as preservation of locality and of convex combinations. This is because such abstract functions might not arise from operationally defined protocols that one might implement physically. Indeed, operational transformations form a proper subset of axiomatically defined ones for the resource theories of entanglement~\cite{Bennetetal:loccvsaxiomatic} and magic~\cite{Heimendahl:axiomaitvsoperationalmagic}. For contextuality and nonlocality it is not known if the axiomatic and operational resource theories agree, although a characterization of those functions arising from (nonadaptive) operational transformations is given in~\cite[Theorem 44]{closingbell}.

One possible explanation for anomalies of nonlocality, put forward in~\cite{schmid:losr}, is that they stem from using LOSR transformations with nonlocality and local operations and classical communication (LOCC) with entanglement. Indeed, \cite{schmid:losr} shows that many of the anomalies disappear when working with LOSR entanglement. In particular they show, for bipartite pure states, that there are no catalysts for LOSR entanglement. We conjecture that there are no catalysts in general for LOSR entanglement, in which case this anomaly is fully explained by entanglement and nonlocality being measured with LOCC and LOSR transformations respectively. For a contrasting viewpoint, see~\cite{Senguptaetal:Nonlocalityisentanglement} which argues that both entanglement and nonlocality should be measured in terms of LOCC transformations.

\prlsection{No catalysis for contextuality}We begin by briefly reviewing the resource theory of contextuality as defined in~\cite{comonadicview}. To start, we formalize the idea of a ``measurement scenario'' $S$: we imagine a situation where there is a finite set $X_S$ of measurements available, each measurement $x\in X_S$ giving rise to outcomes in some finite set $O_{S,x}$. However, only some measurements might feasible to perform together---other combinations may be ruled out by practical limitations or excluded by physical theory. We collect all jointly compatible measurements into a single set $\Sigma_S$, which we expect to satisfy two natural properties: (1) any measurement $x\in X_S$ induces a compatible set $\{x\}\in\Sigma_S$, and (2) any subset of a compatible set of measurements is compatible. Collecting all this data together results in the measurement scenario $S=\scen{S}$.

Given two scenarios $S$ and $T$, we let $S\otimes T$ denote the scenario that represents having access to $S$ and $T$ in parallel, so that a joint measurement is possible precisely if its components in $S$ and $T$ are possible.

An \emph{empirical model} over a scenario $S$ is given by specifying for each compatible $\sigma\in\Sigma_S$ a joint probability distribution $e_\sigma$ for measurements in $\sigma$. We only consider empirical models for which the behavior of a joint measurement does not depend on what is measured with it, if anything. Thus, whenever $\tau\subset\sigma\in\Sigma_S$,  the distribution $e_\tau$ over $\tau$ can be obtained by marginalizing $e_\sigma$ to $\tau$. We express this generalization of the usual no-signaling conditions as $e_\sigma|_\tau=e_\tau$, so that more generally for a joint distribution $d$ over outcomes of $Y\subset X_S$ and $Z\subset Y$, the expression $d|_Z$ denotes the marginal distribution on outcomes of $Z$. 

If the impossibility of measuring everything together is only a practical limitation, one can contemplate the distribution $d$ that would arise when measuring $X_S$. If $d$ explained the model $e$ we have at hand, we would expect it to satisfy $d|_\sigma=e_\sigma$ for every $\sigma\in\Sigma_S$. If such a distribution exists, we call $e$ ``nonconextual''. If no such distribution exists, we call $e$ ``contextual'', as we have reason to believe that it is infeasible in principle to measure everything together, unless one accepts that the observed joint distribution for a subset $Y\subset X$ depends on what it is measured with i.e., its context.

A simple example of scenario and a contextual empirical model on it, discussed in~\cite{Spekkerstriangle}, is given by three measurements, any two of which are compatible but not all three together. Each measurement takes outcomes in $\{0,1\}$, and whenever two measurements are performed one observes $(0,1)$ and $(1,0)$ with equal probability, with the probability distributions for singletons fixed by this and resulting in observing the two outcomes with equal probability. To see that this is contextual, note that there is no joint outcome for all three measurements that is consistent with the observed marginals. 

More examples can be obtained from scenarios studied in nonlocality, where one typically specifies a scenario by giving the number of parties, the number of measurements available to each of them and the size of the outcome sets, where it is then understood that maximal compatible measurements are given by a choice of a single measurement by each party. This includes for instance the famous Clauser, Home, Shimony, and Holt model~\cite{CHSH} and the PR box~\cite{popescu1994quantum} which goes beyond what is allowed in quantum mechanics, both models arising in the scenario with two parties having access to two dichotomic measurements. 

We now move on to transformations between scenarios and empirical models. We will build up to ``wirings'' that in full generality capture the idea of ``simulating'' simulating an empirical model from another adaptively with the help of noncontextual randomness. These will be the free transformations of our resource theory, but we begin by considering the problem of building one scenario from another. A particularly simple way of building $T$ from $S$, is by declaring that for each measurement $x\in X_T$ of $T$ some measurement $\pi(x)\in X_S$ of $S$ is to be performed instead. Moreover, each outcome $o$ of $\pi(x)$ is to be interpreted as the outcome $\alpha_x(o)$ instead. If for each compatible $\sigma\in\Sigma_S$ the corresponding measurement $\pi(\sigma)$ is jointly compatible in $S$, the pair $\tuple{\pi,\alpha=(\alpha_x)_{x\in X}}$ describes a way of building $T$ from $S$, and we will denote this by writing $\tuple{\pi,\alpha}\colon S\to T$.
 
Given such $\tuple{\pi,\alpha}\colon S\to T$, any empirical model $e:S$ induces a model on $T$ that describes the statistics one would see if one was to observe the statistics given by $e$ and then transform them according to $\tuple{\pi,\alpha}$. We then denote by $\tuple{\pi,\alpha}_*$ the function that pushes empirical models on $S$ forward to empirical models on $T$, so that for $e:S$ the induced model on $T$ is denoted by $\tuple{\pi,\alpha}_* (e)$. The pair $\tuple{\pi,\alpha}$ is defined to be a ``deterministic simulation'' of $e:T$ from $d:S$, denoted by $\tuple{\pi,\alpha}\colon d\to e$, precisely when $\tuple{\pi,\alpha}$ transforms  $d$ to $e$, i.e., if $\tuple{\pi,\alpha}_*(d)=e$. Such a simulation is depicted in Fig.~\ref{fig:procedure} 
 
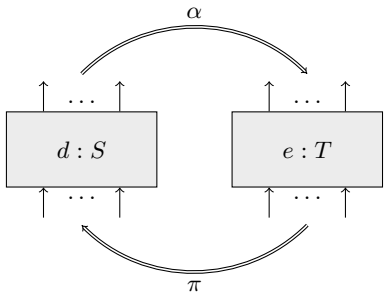
\begin{figure}
\begin{tikzpicture}
\node[box] (a) [label=above:$\ldots$, label=below:$\ldots$] at (0,0){$d:S$};

\node[box] (b) [label=above:$ \ldots $, label=below:$\ldots$] at (3,0) {$e:T$};

\draw [->] ([xshift=.5cm]a.north west) to +(0,.4);
\draw [->] ([xshift=-.5cm]a.north east) to +(0,.4);
\draw [<-] ([xshift=.5cm]a.south west) to +(0,-.4);
\draw [<-] ([xshift=-.5cm]a.south east) to +(0,-.4);

\draw [->] ([xshift=.5cm]b.north west) to +(0,.4);
\draw [->] ([xshift=-.5cm]b.north east) to +(0,.4);
\draw [<-] ([xshift=.5cm]b.south west) to +(0,-.4);
\draw [<-] ([xshift=-.5cm]b.south east) to +(0,-.4);

\draw [->, double,>=implies] ([yshift=.5cm]a.north)  to[out=45,in=135] node[above] {$\alpha$} ([yshift=.5cm]b.north);
\draw [->, double, >=implies] ([yshift=-.5cm]b.south) to[out=-135,in=-45]  node[below] {$\pi$}  ([yshift=-.5cm]a.south);
\end{tikzpicture}\caption{Depiction of a deterministic simulation $d\to e$, with $\pi$ transforming inputs of $T$ to those of $S$ and $\alpha$ transforming outputs of $S$ to outputs of $T$.}\label{fig:procedure}
\end{figure}

There are two ways in which these deterministic simulations are weaker than one would want in a general resource theory of contextuality: first, one might want to allow the usage of auxiliary (noncontextual) randomness, so that the dependence of measurements and their outcomes of $T$ on those of $S$ is stochastic. The usage of auxiliary randomness is captured in~\cite{amaral2017noncontextual} and discussed further in~\cite{amaral2019resource}, and in our terms, could be defined by allowing probabilistic mixtures of transformations. However, this viewpoint leaves out another important generalization: namely  the possibility that a single measurement in $T$ can depend on a joint measurement of $S$ as in~\cite{karvonen2018categories} or more generally on a measurement protocol on $S$ that chooses which joint measurement to perform \emph{adaptively}.  The most general formulation of this idea would allow a measurement $x$ in $T$ to be simulated by a probabilistic and \emph{adaptive} procedure, that first measures (depending on some classical randomness) something in $S$ and then, based on the outcome (and possibly further classical randomness), chooses what to measure next (if anything) and so on. This idea is formalized carefully in~\cite{comonadicview} in two stages, the first one adding adaptivity and the second adding randomness. 

To model adaptivity, one builds from a scenario $S$ a new scenario $\MP(S)$ of (deterministic) ``measurement protocols'' over $S$. A measurement protocol is a procedure that, at any stage, either stops and reports all of the measurement results obtained so far, or, based on previously seen outcomes, performs a measurement in $S$ that is compatible with the previous measurements. The measurements of $\MP(S)$ are given by such protocols over $S$, and a set of measurement protocols is compatible if they can be performed jointly without having to query measurements outside of $\Sigma_S$. Then one can define adaptive (but still deterministic) transformations $S\to T$ between scenarios as deterministic transformations  $\MP(S)\to T$ as in Fig.~\ref{fig:MP}. In~\cite{comonadicview}, we show that the assignment $S\mapsto \MP(S)$ defines a comonad on the category of scenarios: this abstract language is not needed here but can be thought of as guaranteeing that one has a well-behaved way of composing $\MP(S)\to T$ with $\MP(T)\to U$ to obtain a map $\MP(S)\to U$. Intuitively, the composite is obtained by first interpreting each measurement in $U$ as a measurement protocol over $T$, and each measurement in that measurement protocol as an measurement protocol over $S$, and then ``flattening'' the resulting measurement protocol of measurement protocols \{i.e., a measurement of $\MP[\MP((S)]$\} into a measurement protocol over $S$. 

\begin{figure}
\begin{tikzpicture}
\draw[black, fill = gray, fill opacity = 0.5, semithick] (-1.5,-1.25) rectangle (1.5,1.25);
\node at (-1,1) {$\MP$};
\node[box] (a) [label=above:$\ldots$, label=below:$\ldots$] at (0,0){$d: S$};

\node[box] (b) [label=above:$ \ldots $, label=below:$\ldots$] at (3,0) {$e: T$};

\draw [->] ([xshift=.5cm]a.north west) to +(0,.4);
\draw [->] ([xshift=-.5cm]a.north east) to +(0,.4);
\draw [<-] ([xshift=.5cm]a.south west) to +(0,-.4);
\draw [<-] ([xshift=-.5cm]a.south east) to +(0,-.4);

\draw [->] ([xshift=.5cm]b.north west) to +(0,.4);
\draw [->] ([xshift=-.5cm]b.north east) to +(0,.4);
\draw [<-] ([xshift=.5cm]b.south west) to +(0,-.4);
\draw [<-] ([xshift=-.5cm]b.south east) to +(0,-.4);

\draw [->, double,>=implies] ([yshift=.5cm]a.north) to[out=45,in=135]   node[above] {$\alpha$} ([yshift=.5cm]b.north);
\draw [->, double, >=implies] ([yshift=-.5cm]b.south) to[out=-135,in=-45] node[below] {$\pi$} ([yshift=-.5cm]a.south);

\end{tikzpicture}\caption{Depiction of an adaptive simulation $d\to e$}\label{fig:MP}
\end{figure}
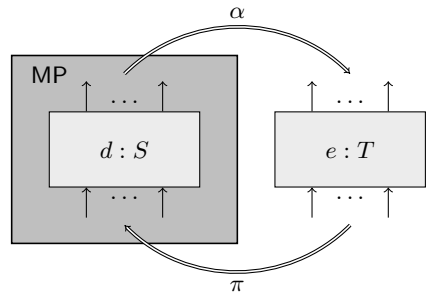

To model (noncontextual) randomness, one then defines a simulation $d\to e$ to be a deterministic simulation $\MP(d\otimes c)\to e$ for some noncontextual model $c$ as in Fig.~\ref{fig:MPandrandomness}. Again, such simulations compose in a well-behaved manner, i.e. they form a category~\footnote{the abstract explanation is that the $\MP$ comonad is comonoidal as observed in~\cite[Theorem 17]{comonadicview}. Roughly speaking this boils down to the properties of the transformations $\MP(S\otimes T)\to \MP(S)\otimes \MP(T)$ which interpret measurement protocols in $S$ or in $T$ as measurement protocols over $S\otimes T$.}.

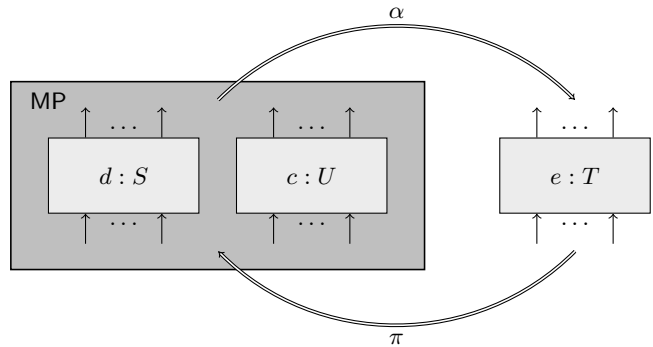
\begin{figure}
\begin{tikzpicture}
\draw[black, fill = gray, fill opacity = 0.5, semithick] (-3.5,-1.25) rectangle (2,1.25);
\node at (-3,1) {$\MP$};
\node[box] (a) [label=above:$\ldots$, label=below:$\ldots$] at (-2,0){$d:S$};

\node[box] (c) [label=above:$\ldots$, label=below:$\ldots$] at (0.5,0){$c:U$};

\node[box] (b) [label=above:$ \ldots $, label=below:$\ldots$] at (4,0) {$e:T$};

\draw [->] ([xshift=.5cm]a.north west) to +(0,.4);
\draw [->] ([xshift=-.5cm]a.north east) to +(0,.4);
\draw [<-] ([xshift=.5cm]a.south west) to +(0,-.4);
\draw [<-] ([xshift=-.5cm]a.south east) to +(0,-.4);

\draw [->] ([xshift=.5cm]c.north west) to +(0,.4);
\draw [->] ([xshift=-.5cm]c.north east) to +(0,.4);
\draw [<-] ([xshift=.5cm]c.south west) to +(0,-.4);
\draw [<-] ([xshift=-.5cm]c.south east) to +(0,-.4);

\draw [->] ([xshift=.5cm]b.north west) to +(0,.4);
\draw [->] ([xshift=-.5cm]b.north east) to +(0,.4);
\draw [<-] ([xshift=.5cm]b.south west) to +(0,-.4);
\draw [<-] ([xshift=-.5cm]b.south east) to +(0,-.4);

\draw [->, double,>=implies] ([yshift=.5cm,xshift=.25cm]a.north east) to[out=45,in=135]  node[above] {$\alpha$} ([yshift=.5cm]b.north);
\draw [->, double, >=implies] ([yshift=-.5cm]b.south) to[out=-135,in=-45] node[below] {$\pi$}  ([yshift=-.5cm,xshift=.25cm]a.south east);
\end{tikzpicture}\caption{General simulation $d\to e$, where we require $c$ to be noncontextual.}\label{fig:MPandrandomness}
\end{figure}

We will denote the existence of such a simulation $d\to e$ by $d\simulates e$, read as ``$d$ simulates $e$.''  Simulations thus defined interact well with contextuality. For instance, \cite[Theorem 21]{comonadicview} states that the noncontextual fraction, studied in~\cite{abramsky2017contextual}, is a monotone, that is, if $d\simulates e$, then $\NCF(d)\leq\NCF(e)$; and \cite[Theorem 4.1]{karvonen2018categories} implies that an empirical model is noncontextual if and only if it can be simulated from the trivial model on the empty scenario. In fact, the notions of logical and strong contextuality can be captured along similar lines~\cite{closingbell} by relaxing the equality of probability distributions in the definition of simulation by equality (or inclusion) of supports of these distributions.

We now state our main result for this resource theory of contextuality.
\begin{theorem}\label{thm:nocatalysisforcontextuality} If $d\otimes e\simulates d\otimes f$ in the resource theory of contextuality, then $e\simulates f$.
\end{theorem}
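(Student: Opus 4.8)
The plan is to unfold the given conversion into concrete data and then remove the catalyst by hand. Write $d,e,f$ for empirical models on scenarios $D,E,F$. By definition $d\otimes e\simulates d\otimes f$ means there is a noncontextual model $c$ and a deterministic wiring $\tuple{\pi,\alpha}\colon\MP\bigl((d\otimes e)\otimes c\bigr)\to D\otimes F$ with $\tuple{\pi,\alpha}_*(d\otimes e\otimes c)=d\otimes f$. Two features of this witness are crucial, and I would record them first: the output marginal on the $D$-factor is exactly $d$ (the catalyst is recovered), and the output is a \emph{product}, so in the simulated model the recovered $d$ is independent of the produced $f$. The goal is to manufacture from this a wiring $\MP(e\otimes c')\to F$ for some noncontextual $c'$, witnessing $e\simulates f$.

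The naive move---supplying the catalyst $d$ itself as part of the free randomness $c'$---is exactly what must fail, since only noncontextual models are free whereas $d$ may be contextual; so the catalyst has to be eliminated rather than reintroduced. As preliminary structure I would note that catalytic conversions iterate: since simulations compose and $\otimes$ is symmetric monoidal, tensoring $\tuple{\pi,\alpha}$ with identities and reshuffling factors gives $d\otimes e^{\otimes n}\simulates d\otimes f^{\otimes n}$ for every $n$, so a single copy of $d$ drives arbitrarily many independent $e\to f$ conversions while being returned intact each time. This isolates $d$ as the sole obstruction. The substantive step is then to restrict the wiring to the $F$-measurements, obtaining a family of measurement protocols $\{\pi(y): y\in X_F\}$ over $(d\otimes e)\otimes c$, and to rewrite every query these protocols make into the $D$-factor so that it is answered instead from free randomness, all without altering the induced model $f$.

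I expect this last rewriting to be the main obstacle. Within any single joint run the $F$-protocols touch only a \emph{compatible} family of $D$-measurements---forced by the compatibility condition on $\MP$---and a compatible family is governed by a single no-signaling distribution obtained by restricting $d$ to one context, which is reproducible noncontextually. The difficulty is purely global: across different branches of the protocol the queried $D$-measurements can be mutually incompatible, and when $d$ is contextual no single noncontextual model restricts correctly to all of these contexts at once, so a static branch-by-branch substitution does not suffice. Here I would exploit the product structure recorded above: because the produced $f$ is independent of the recovered $d$, the $F$-outputs cannot be genuinely exploiting the catalyst's contextuality, and the job is to turn this independence, together with the adaptive freedom of $\MP$, into an actual replacement of the $D$-queries by free randomness carried in $c'$. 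Verifying that this substitution leaves the statistics of $f$ invariant and yields a \emph{bona fide} noncontextual resource is where the real work lies; the iterated form $d\otimes e^{\otimes n}\simulates d\otimes f^{\otimes n}$, which amortizes the cost of the single catalyst over $n$ conversions, is what I would use to make precise the intuition that the catalyst's contextual content is never actually consumed and hence can be dispensed with.
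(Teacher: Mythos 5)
You have correctly set up the problem (unfolding to a deterministic wiring with noncontextual $c$, the failure of the naive move, and the iteration $d\otimes e^{\otimes n}\simulates d\otimes f^{\otimes n}$), and you have correctly located the obstacle: across different adaptive branches the queries into the $D$-factor may be mutually incompatible. But your proposal stops exactly where the proof has to start. The mechanism you offer for removing the catalyst---``because the produced $f$ is independent of the recovered $d$, the $F$-outputs cannot be genuinely exploiting the catalyst's contextuality''---is a heuristic, not an argument, and it is not how the resolution actually goes. You give no construction of $c'$, no proof that the substitution preserves the statistics of $f$, and you concede this yourself (``where the real work lies''). In the paper, the independence of the output $d\otimes f$ is used only to justify the iteration; the elimination of $d$ comes from a different, combinatorial source.

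The missing idea is this. First, since there are only finitely many deterministic procedures $\MP(S\otimes R)\to U$ (where $R$ is the scenario of $g=e\otimes c$), the pigeonhole principle lets you choose $n$ so large that the same deterministic simulation $\tuple{\pi,\alpha}\colon d\otimes g\to f$ is used for as many copies as you need. Second---and this is the heart of the proof---the branches of $\pi(x)$ are indexed by the outcome assignments $t_1,\dots,t_k\in\Ev_R(X_R)$ of the $g$-part, and you can \emph{realize all branches simultaneously}: run the same protocol against $k$ independent copies of $g$ but the single shared $d$. The distinct copies of $x$ are jointly compatible in $U^{\otimes n}$, so their images under the simulation are jointly performable, which forces the restricted protocols $\{\pi(x)(t_i)\mid i=1,\dots,k\}$ (in the sense of Remark~\ref{rem:productMPs}) to be a \emph{compatible} set of measurement protocols over $S$. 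Lemma~\ref{lem:combiningMPs} then merges them into a single protocol $P_x$; running the same argument across distinct $x\in X_U$ (placed in different copies of $f$) shows $\{P_x\mid x\in X_U\}$ is itself compatible, i.e.\ comeasurable. Hence the restriction $\hat d$ of $\MP(d)$ to these protocols is a \emph{noncontextual} model, and the deterministic simulation $\hat d\otimes e\otimes c\to f$ exhibits $e\simulates f$. In short: the iteration is not used, as you suggest, to ``amortize'' the catalyst's contextual content, but to prove that the part of $d$ the wiring ever touches is jointly measurable and therefore free; without this cross-branch compatibility argument your plan cannot be completed.
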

The key ideas of the proof are simple, even if the full details get technical: if $d$ can catalyze a transformation $e\to f$ once, it can do so arbitrarily many times. Choosing a big enough number of copies of $e$ to transform to copies of $f$, using the pigeonhole principle we can show that one only needs a compatible subset of $S_d$ (or rather, a compatible set of measurement protocols). Making this precise requires formalizing our framework more carefully, and we do this in the Supplemental Material \footnote{See Supplemental Material at \url{http://link.aps.org/supplemental/10.1103/PhysRevLett.127.160402} or at the end of this document for formal development of background material and full proofs.}.
 Our result subsumes~\cite[Theorem 22]{comonadicview}, as setting $d=f$ and letting $e$ be the trivial model on the empty scenario implies that if $d\simulates d\otimes d$, the model $d$ must be noncontextual.

\prlsection{No-catalysis for nonlocality}We know explain how to interpret nonlocality within this framework. At the level of scenarios and models on them, nonlocality can be seen as a special case of contextuality: for nonlocality, the measurement scenario typically arises by considering $n$ parties, with the $i$th party choosing one measurement from a set $X_i$ of measurements available to them, with a measurement $x\in X_i$ giving outcomes in some outcome set $O_{i,x}$. Often one restricts the situation even further and assumes that each party has the same number of measurements available and each measurement takes outcomes in a set of the same size, so that the scenario is specified (up to isomorphism) by three numbers: the number of parties, the number of measurements available to each of them and the size of the outcome sets. Whether or not one imposes this further restriction, such scenarios are of the form $S=\bigotimes_{i=1}^n S_i$ where each $S_i$ is just a discrete set of measurements (so only singleton measurements are possible in each $S_i$). In particular, a maximal measurement corresponds to a choice $(x_1,\dots x_n)$ of a measurement at each site, and a set of correlations can be given as a family $p(o_1,\dots o_n|x_1,\dots x_n)$ of conditional probabilities for each such measurement. If the family $p$ is (fully) no-signaling, it corresponds to a unique empirical model $e: S$ (where the probabilities over nonmaximal measurements are obtained by marginalization), and $p$ is local if and only if $e$ is noncontextual. 

However, if we allow parties to share different non-local resources, we move away from the situation where each party chooses one measurement from a set of mutually exclusive measurements. For instance, if Bob shares one box with Alice and one with Charlie, Bob is not limited to a single measurement: he can choose one for each box. As quantum theory allows for arbitrary joint measurability graphs in the case of projection-valued measurements~\cite{jointmeasurabilitygraphs} and arbitrary simplicial complexes in the case of positive-operator valued measurements~\cite{jointmeasurabilitystructures}, we impose no restrictions on the measurement scenarios available at each site. However, if we are thinking about a $n$-partite scenario, it is reasonable to expect that measurement choices done at one site do not affect the  measurements available at another one. Thus, we model an $n$-partite scenario as a tuple $(S_i)_{i=1}^n$ of scenarios, thought of as representing the $n$ parties sharing the scenario  $\bigotimes_{i=1}^n S_i$.

At the level of transformations between models and scenarios, nonlocality is no longer a special case of contextuality, as observed for example in~\cite[Appendix A.1]{Wolfeetal:quantifyingbell}. This is because our wirings are slightly too general as they allow the $i$th party to wire some of their measurements to measurements belonging to other parties, whereas operationally speaking it is reasonable to require each party to have access only to measurements available to them (and shared randomness).  Nevertheless, the resource theories are very closely related, as one can obtain the resource theory for $n$-partite nonlocality from that of contextuality via a general construction, used in the context of cryptography in~\cite{crypto}, that builds a resource theory of $n$-partite resources from a given resource theory. 

In our setting, this amounts to defining the resource theory of $n$-partite nonlocality as follows: an $n$-partite scenario is an $n$-tuple $(S_i)_{i=1}^n$ of scenarios (some of which may be empty), and an $n$-partite empirical model $e:(S_i)_{i=1}^n$ is an empirical model on $\bigotimes_{i=1}^n S_i$. The parallel composite $\boxtimes$ of scenarios is defined pointwise, i.e., by setting $(S_i)_{i=1}^n\boxtimes (T_i)_{i=1}^n=(S_i\otimes T_i)_{i=1}^n$. To define $\boxtimes$ for two $n$-partite models  $e:(S_i)_{i=1}^n$ and $d: (T_i)_{i=1}^n$, note that the scenarios $(\bigotimes_{i=1}^n S_i)\otimes (\bigotimes_{i=1}^n T_i)$ and $\bigotimes_{i=1}^n (S_i\otimes T_i)$ are canonically isomorphic so that the model $e\otimes d:(\bigotimes_{i=1}^n S_i)\otimes (\bigotimes_{i=1}^n T_i)$ induces a model $e\boxtimes d$ on $\bigotimes_{i=1}^n (S_i\otimes T_i)$ via this isomorphism. Finally, an $n$-partite simulation $d\to e$ is defined as an $n$-tuple of simulations $[\MP(T_i\otimes P_i\to S_i)]$ that, when taken together, transform $d\boxtimes c$ to $e$ where $c$ is some noncontextual shared correlation. In this manner, the resource theory of $n$-partite nonlocality is derived from that of contextuality by keeping track of the $n$-partite nature of scenarios and transformations between them.

Our proof of Theorem~\ref{thm:nocatalysisforcontextuality} readily implies that nonlocality admits no catalysts. 
\begin{theorem}\label{thm:nocatalysisfornonlocality}
 If $d\boxtimes e\simulates d\boxtimes f$ in the resource theory of $n$-partite nonlocality, then $e\simulates f$.
\end{theorem}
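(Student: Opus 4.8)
The plan is to re-run the proof of Theorem~\ref{thm:nocatalysisforcontextuality}, keeping track of the $n$-partite structure at every step; a black-box appeal to Theorem~\ref{thm:nocatalysisforcontextuality} will not suffice. Indeed, forgetting the partition into parties turns the hypothesis $d\boxtimes e\simulates d\boxtimes f$ into an ordinary simulation $d\otimes e\simulates d\otimes f$, so Theorem~\ref{thm:nocatalysisforcontextuality} yields $e\simulates f$ in the resource theory of contextuality; but the simulation it produces may wire one party's outputs to another party's measurements, and hence need not be an $n$-partite simulation. The point is therefore that each manipulation below sends $n$-tuples of per-party protocols to $n$-tuples of per-party protocols, so that applied to the $n$-partite hypothesis it delivers an $n$-partite witness for $e\simulates f$.

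First I would iterate the catalysis. Since $n$-partite simulations compose and the hypothesis recovers the catalyst $d$ on the output, I can feed the recovered $d$ back together with a fresh copy of $e$; each such round reuses $d$, consumes exactly one fresh copy of $e$, and emits one copy of $f$. Repeating $N$ times produces a single $n$-partite simulation witnessing $d\boxtimes e^{\boxtimes N}\simulates d\boxtimes f^{\boxtimes N}$ for every $N\ge 1$.

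Next comes the pigeonhole, applied party by party. Fix a target context that recovers $d$ and measures the $N$ output copies of $f$, choosing these copies so that, at each party, their contexts range over all of $\Sigma_{S_{f,i}}$ (possible once $N$ exceeds the finite number of such contexts). To produce this context, party $i$'s protocol queries its copy $S_{d,i}$ of the catalyst only within a single compatible set of measurement protocols; since $X_{S_{d,i}}$ is finite, this set is bounded independently of $N$. As the same bounded, compatible set of catalyst-measurement-protocols now serves to produce $f$ at \emph{every} one of its contexts at party $i$, and as a compatible set of measurements carries a single joint outcome distribution---a noncontextual resource---party $i$'s use of the catalyst can be replaced by a share of free noncontextual randomness. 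Folding this into the shared correlation $c$ and discarding the recovered catalyst, while recalling that each round drew on a single copy of $e$, leaves an $n$-tuple of per-party protocols that simulate $f$ from one copy of $e$ together with noncontextual randomness; since adjoining and discarding noncontextual resources are free $n$-partite operations, this is exactly an $n$-partite simulation $e\simulates f$ (in the degenerate case where the surviving protocol uses no copy of $e$, the model $f$ is itself noncontextual and $e\simulates f$ holds trivially).

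The main obstacle is making the pigeonhole rigorous inside the adaptive formalism of $\MP(d\boxtimes e^{\boxtimes N}\boxtimes c)$: one must show that the catalyst queries accumulated across the $N$ reuse-rounds really do form a single compatible set that suffices at all target contexts simultaneously, and then reassemble the per-context, per-round productions into \emph{one} marginalization-consistent simulation on a single copy of $e$ once the catalyst has been folded into free randomness. Throughout, one must verify that these restrictions keep each party's protocol confined to that party's own resources, so that the extracted witness is genuinely $n$-partite rather than merely a simulation in the resource theory of contextuality; this per-party bookkeeping is the only substantive addition to the proof of Theorem~\ref{thm:nocatalysisforcontextuality}.
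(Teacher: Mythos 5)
Your overall strategy is the paper's: it too observes that a black-box appeal to Theorem~\ref{thm:nocatalysisforcontextuality} is insufficient (the extracted simulation might wire across parties), passes through the canonical isomorphism $(\bigotimes_{i=1}^n S_i)\otimes(\bigotimes_{i=1}^n T_i)\cong\bigotimes_{i=1}^n(S_i\otimes T_i)$ to obtain a contextuality simulation $d\otimes e\simulates d\otimes f$, re-runs the proof of Theorem~\ref{thm:nocatalysisforcontextuality}, and then checks that the witness that proof \emph{constructs} is of the partite form $\bigotimes_i\tuple{\pi_i',\alpha_i'}$; the check succeeds because, for a measurement $x$ owned by party $i$, the constructed protocol first performs one combined protocol $P_x$ over the catalyst's scenario and then continues in the remaining scenario, and both stages stay inside party $i$'s component since the original simulation was partite. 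So your plan and your partite-ness bookkeeping match the paper.

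The gap is in how you re-run the contextuality argument: your pigeonhole is over the contexts of $f$ at each party, whereas the paper's is over the finitely many \emph{deterministic procedures} $\MP(S\otimes R)\to U$, forcing a single procedure $\tuple{\pi,\alpha}$ to recur in as many rounds as needed. That uniformity is precisely what your two acknowledged ``obstacles'' require, and pigeonholing over contexts cannot supply it. First, the compatibility of the accumulated catalyst queries is proved by playing the \emph{same} protocol $\pi(x)$ against different copies of $g$: copies of $x$ in different copies of $U$ are jointly measurable, and since their simulations share only the catalyst, the restrictions $\pi(x)(t_i)$ must be compatible whatever the $g$-outcomes $t_i$; if different rounds use different procedures, this argument does not go through. (Your stated justification---``since $X_{S_{d,i}}$ is finite, this set is bounded independently of $N$''---concerns boundedness, not compatibility; finiteness does not make the queried protocols jointly measurable, and joint measurability is exactly what makes $\hat d$ noncontextual.) Second, with a single recurring procedure there is nothing to ``reassemble'': the final simulation is defined measurement-wise, $x\mapsto P_x$ followed by the continuation $\pi(x)(s)$ in $R$, and it pushes $e\otimes(\text{noncontextual part})$ forward to $f$ because $\tuple{\pi,\alpha}$ already did. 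Patching together \emph{different} per-round procedures for different contexts, as your sketch proposes, would in general assign conflicting protocols to a measurement lying in two contexts and need not produce a marginalization-consistent model at all. So the approach is right, but the steps you defer are the substance of the proof, and the pigeonhole you substitute would not let you carry them out.
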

\prlsection{Quantum-assisted transformations and beyond}We briefly discuss a further generalization of our main result. One could consider resource theories with even more expressive simulations than the ones defined above. In the above, we define simulations as deterministic simulations assisted by noncontextual randomness. As suggested in~\cite{comonadicview}, one could allow more general correlations to be used in transformations and study, e.g., quantum-assisted simulations. More specifically, one could define quantum-assisted 
simulations $e\to f$ as deterministic simulations $\MP(e\otimes q)\to f$ where $q$ is a quantum-realizable empirical model. More generally, for any class $\X$ of empirical models that is closed under $\otimes$, one gets a well-defined notion of $\X$-assisted simulations, and we will write $d\simulates_\X e$ to denote the existence of an $\X$-assisted simulation from $d$ to $e$. If the class $\X$ contains all non-local correlations, the resulting resource theory has no catalysts. A similar result holds for $\X$-assisted $n$-partite transformations between $n$-partite correlations, but we restrict ourselves to stating the theorem for contextuality.
\begin{theorem}\label{thm:generalnocatalysis}
Let $\X$ be a class of empirical models that is closed under $\otimes$ and contains all classical correlations. Then  $d\otimes e\simulates_\X d\otimes f$ implies $e\simulates_\X f$.
\end{theorem}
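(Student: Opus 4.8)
The plan is to run the same argument that proves Theorem~\ref{thm:nocatalysisforcontextuality}, verifying at each step that the only features of the free resources we use are the two hypotheses on $\X$: closure under $\otimes$ and containment of all classical correlations. So assume $d\otimes e\simulates_\X d\otimes f$, witnessed by a deterministic simulation $\MP(d\otimes e\otimes q)\to d\otimes f$ for some $q\in\X$. The goal is to produce a single $\X$-assisted simulation of $f$ from $e$.

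First I would prove an amplification lemma: $d\otimes e^{\otimes n}\simulates_\X d\otimes f^{\otimes n}$ for every $n\geq 1$. The inductive step reuses the catalyst, converting one copy of $e$ at a time while restoring $d$:
\[
d\otimes e^{\otimes(n+1)}\simulates_\X d\otimes f^{\otimes n}\otimes e\cong d\otimes e\otimes f^{\otimes n}\simulates_\X d\otimes f^{\otimes(n+1)}\Mdot
\]
Here I use that $\simulates_\X$ is reflexive and transitive and that $\otimes$ is a symmetric monoidal product on simulations, so a simulation may be tensored with an identity. The $\X$-resources needed for the two halves are merged into a single free resource using closure of $\X$ under $\otimes$; this is the only place the first hypothesis enters, and it is exactly the step that went through for noncontextual randomness in Theorem~\ref{thm:nocatalysisforcontextuality}.

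Next comes the pigeonhole. Fixing $n$ large, I would analyze the deterministic simulation producing $d\otimes f^{\otimes n}$ and track how it queries the single input copy of the catalyst. A simulation accesses $d$ only through a compatible family of measurement protocols in $\MP(S_d)$, and such a family is realized, along any run, by a single compatible context $\sigma\in\Sigma_{S_d}$ of the finite scenario $S_d$. Hence the catalyst contributes only a bounded amount of information per run, namely an outcome of $\sigma$ distributed according to the marginal $d|_\sigma$. Since $S_d$ is fixed while the number of copies grows, the pigeonhole principle forces some copy $f_j$ to be produced by a sub-protocol whose dependence on the catalyst factors entirely through such classical outcomes, together with its own input copy of $e$ and the free $\X$-resource $q$.

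Finally I would eliminate the catalyst. Its contribution to producing $f_j$ is a sample from the fixed classical distribution $d|_\sigma$; since $\X$ contains all classical correlations and is closed under $\otimes$, that sample can be supplied as a free resource rather than extracted from $d$. Replacing the catalyst by this free classical randomness turns the sub-protocol for $f_j$ into an $\X$-assisted simulation of $f$ from $e$, giving $e\simulates_\X f$. This last move is where the second hypothesis is used, and it is the analogue of absorbing noncontextual randomness into the free part in Theorem~\ref{thm:nocatalysisforcontextuality}. The main obstacle is making the pigeonhole step precise: one must formalize ``the dependence on the catalyst factors through classical data'' within the measurement-protocol calculus and keep careful track of the product structure of $d\otimes f^{\otimes n}$, so that isolating a single copy does not secretly reintroduce contextual correlations with the recovered $d$. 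Once this bookkeeping of compatible measurement protocols is in place, the argument is uniform in $\X$ and specializes to Theorem~\ref{thm:nocatalysisforcontextuality} when $\X$ is the class of classical correlations.
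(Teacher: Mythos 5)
Your proposal follows the same skeleton as the paper's argument (reduce to a deterministic simulation with the free resource absorbed, amplify the catalysis over $n$ copies, apply a pigeonhole argument, then hand the catalyst's contribution over to the free resources using the two hypotheses on $\X$), and your amplification step and your final step---noting that a noncontextual/classical residue of $d$ together with $q$ lies in $\X$ by the two hypotheses---are exactly where the paper uses them too. However, there is a genuine gap at the center: your pigeonhole step does not prove what it claims. You apply pigeonhole to conclude that ``some copy $f_j$ is produced by a sub-protocol whose dependence on the catalyst factors entirely through classical outcomes,'' justified by the observation that along any single run the protocol only touches one compatible context $\sigma\in\Sigma_{S_d}$ and hence extracts ``a bounded amount of information.'' That per-run observation is true of \emph{every} adaptive protocol, including ones that use $d$ in a maximally contextual way (which context of $S_d$ gets measured is chosen adaptively, possibly depending on $d$'s own outcomes), so it cannot by itself yield classical dependence; and no counting argument over copies is actually exhibited that would force the factoring you assert. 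You flag this yourself as ``the main obstacle,'' but this obstacle is precisely the heart of the proof, not bookkeeping.

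What the paper does instead is: (i) apply pigeonhole to the \emph{finitely many deterministic procedures} $\MP(S\otimes R)\to U$, so that for large $n$ a single fixed procedure $\tuple{\pi,\alpha}$ simulates arbitrarily many copies of $f$; (ii) exploit compatibility \emph{across copies}: the $n$ copies of a target measurement $x$ are jointly compatible, and since they query the same $d$ but different copies of $g=e\otimes c$, the protocols $\pi(x)(t)$ over $S_d$ obtained by fixing the $g$-outcomes to each $t\in\Ev_R(X_R)$ (Remark~\ref{rem:productMPs}) must form a compatible set, which Lemma~\ref{lem:combiningMPs} merges into a single protocol $P_x$; (iii) observe that $\{P_x \mid x\in X_U\}$ is itself compatible, so the restriction $\hat{d}$ of $\MP(d)$ to it is noncontextual. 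Only then does the classical-replacement step you describe go through: $\hat{d}\otimes c\in\X$ because $\X$ contains all classical correlations and is closed under $\otimes$, giving $e\simulates_\X f$. In short, the classicality of the catalyst's contribution is not found by pigeonhole on copies; it is forced by the compatibility constraints that arise once a \emph{single} procedure is reused, and that mechanism is absent from your proposal.
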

For instance, if we take $\X$ to be the class of quantum-realizable empirical models, this result implies that, if by some miracle we got access to a single PR box, we could not use it to catalyze a quantum-assisted transformation that was hitherto impossible. Put another way, no matter what class of correlations we can use freely, the only way to get mileage out of a box that goes beyond our powers is to spend it---so we must choose wisely. 

\begin{acknowledgments}
We wish to thank Marcelo Terra Cunha for asking whether contextuality admits catalysts and for Ehtibar Dzhafarov for organizing the Quantum Contextuality in Quantum Mechanics and Beyond workshop where we met Marcelo and became interested in this question. We also thank Samson Abramsky, Rui Soares Barbosa and Shane Mansfield for helpful discussions.
\end{acknowledgments}

\bibliography{catalysisupdated}

\begin{thebibliography}{55}%
\makeatletter
\providecommand \@ifxundefined [1]{%
 \@ifx{#1\undefined}
}%
\providecommand \@ifnum [1]{%
 \ifnum #1\expandafter \@firstoftwo
 \else \expandafter \@secondoftwo
 \fi
}%
\providecommand \@ifx [1]{%
 \ifx #1\expandafter \@firstoftwo
 \else \expandafter \@secondoftwo
 \fi
}%
\providecommand \natexlab [1]{#1}%
\providecommand \enquote  [1]{``#1''}%
\providecommand \bibnamefont  [1]{#1}%
\providecommand \bibfnamefont [1]{#1}%
\providecommand \citenamefont [1]{#1}%
\providecommand \href@noop [0]{\@secondoftwo}%
\providecommand \href [0]{\begingroup \@sanitize@url \@href}%
\providecommand \@href[1]{\@@startlink{#1}\@@href}%
\providecommand \@@href[1]{\endgroup#1\@@endlink}%
\providecommand \@sanitize@url [0]{\catcode `\\12\catcode `\$12\catcode
  `\&12\catcode `\#12\catcode `\^12\catcode `\_12\catcode `\%12\relax}%
\providecommand \@@startlink[1]{}%
\providecommand \@@endlink[0]{}%
\providecommand \url  [0]{\begingroup\@sanitize@url \@url }%
\providecommand \@url [1]{\endgroup\@href {#1}{\urlprefix }}%
\providecommand \urlprefix  [0]{URL }%
\providecommand \Eprint [0]{\href }%
\providecommand \doibase [0]{https://doi.org/}%
\providecommand \selectlanguage [0]{\@gobble}%
\providecommand \bibinfo  [0]{\@secondoftwo}%
\providecommand \bibfield  [0]{\@secondoftwo}%
\providecommand \translation [1]{[#1]}%
\providecommand \BibitemOpen [0]{}%
\providecommand \bibitemStop [0]{}%
\providecommand \bibitemNoStop [0]{.\EOS\space}%
\providecommand \EOS [0]{\spacefactor3000\relax}%
\providecommand \BibitemShut  [1]{\csname bibitem#1\endcsname}%
\let\auto@bib@innerbib\@empty
\bibitem [{\citenamefont {Kochen}\ and\ \citenamefont {Specker}(1967)}]{ks}%
  \BibitemOpen
  \bibfield  {author} {\bibinfo {author} {\bibfnamefont {S.}~\bibnamefont
  {Kochen}}\ and\ \bibinfo {author} {\bibfnamefont {E.~P.}\ \bibnamefont
  {Specker}},\ }\bibfield  {title} {\bibinfo {title} {The {P}roblem of {H}idden
  {V}ariables in {Q}uantum {M}echanics},\ }\href
  {https://www.jstor.org/stable/24902153} {\bibfield  {journal} {\bibinfo
  {journal} {J. Math. Mech.}\ }\textbf {\bibinfo {volume} {17}},\ \bibinfo
  {pages} {59} (\bibinfo {year} {1967})}\BibitemShut {NoStop}%
\bibitem [{\citenamefont {Budroni}\ \emph {et~al.}()\citenamefont {Budroni},
  \citenamefont {Cabello}, \citenamefont {G\"{u}hne}, \citenamefont
  {Kleinmann},\ and\ \citenamefont {Larsson}}]{Contextualityreview}%
  \BibitemOpen
  \bibfield  {author} {\bibinfo {author} {\bibfnamefont {C.}~\bibnamefont
  {Budroni}}, \bibinfo {author} {\bibfnamefont {A.}~\bibnamefont {Cabello}},
  \bibinfo {author} {\bibfnamefont {O.}~\bibnamefont {G\"{u}hne}}, \bibinfo
  {author} {\bibfnamefont {M.}~\bibnamefont {Kleinmann}},\ and\ \bibinfo
  {author} {\bibfnamefont {{\relax J.-Å.}.}~\bibnamefont {Larsson}},\
  }\href@noop {} {\bibinfo {title} {Quantum contextuality}},\ \Eprint
  {https://arxiv.org/abs/arXiv:2102.13036} {arXiv:2102.13036} \BibitemShut
  {NoStop}%
\bibitem [{\citenamefont {Bell}(1966)}]{bell1966}%
  \BibitemOpen
  \bibfield  {author} {\bibinfo {author} {\bibfnamefont {J.~S.}\ \bibnamefont
  {Bell}},\ }\bibfield  {title} {\bibinfo {title} {On the problem of hidden
  variables in quantum mechanics},\ }\href
  {https://doi.org/10.1103/RevModPhys.38.447} {\bibfield  {journal} {\bibinfo
  {journal} {Rev. Mod. Phys.}\ }\textbf {\bibinfo {volume} {38}},\ \bibinfo
  {pages} {447} (\bibinfo {year} {1966})}\BibitemShut {NoStop}%
\bibitem [{\citenamefont {Brunner}\ \emph {et~al.}(2014)\citenamefont
  {Brunner}, \citenamefont {Cavalcanti}, \citenamefont {Pironio}, \citenamefont
  {Scarani},\ and\ \citenamefont {Wehner}}]{bell-review}%
  \BibitemOpen
  \bibfield  {author} {\bibinfo {author} {\bibfnamefont {N.}~\bibnamefont
  {Brunner}}, \bibinfo {author} {\bibfnamefont {D.}~\bibnamefont {Cavalcanti}},
  \bibinfo {author} {\bibfnamefont {S.}~\bibnamefont {Pironio}}, \bibinfo
  {author} {\bibfnamefont {V.}~\bibnamefont {Scarani}},\ and\ \bibinfo {author}
  {\bibfnamefont {S.}~\bibnamefont {Wehner}},\ }\bibfield  {title} {\bibinfo
  {title} {Bell nonlocality},\ }\href
  {https://doi.org/10.1103/RevModPhys.86.419} {\bibfield  {journal} {\bibinfo
  {journal} {Rev. Mod. Phys.}\ }\textbf {\bibinfo {volume} {86}},\ \bibinfo
  {pages} {419} (\bibinfo {year} {2014})}\BibitemShut {NoStop}%
\bibitem [{\citenamefont {Scarani}\ \emph {et~al.}(2009)\citenamefont
  {Scarani}, \citenamefont {Bechmann-Pasquinucci}, \citenamefont {Cerf},
  \citenamefont {Du{\v{s}}ek}, \citenamefont {L{\"u}tkenhaus},\ and\
  \citenamefont {Peev}}]{scarani2009security}%
  \BibitemOpen
  \bibfield  {author} {\bibinfo {author} {\bibfnamefont {V.}~\bibnamefont
  {Scarani}}, \bibinfo {author} {\bibfnamefont {H.}~\bibnamefont
  {Bechmann-Pasquinucci}}, \bibinfo {author} {\bibfnamefont {N.~J.}\
  \bibnamefont {Cerf}}, \bibinfo {author} {\bibfnamefont {M.}~\bibnamefont
  {Du{\v{s}}ek}}, \bibinfo {author} {\bibfnamefont {N.}~\bibnamefont
  {L{\"u}tkenhaus}},\ and\ \bibinfo {author} {\bibfnamefont {M.}~\bibnamefont
  {Peev}},\ }\bibfield  {title} {\bibinfo {title} {The security of practical
  quantum key distribution},\ }\href
  {https://doi.org/10.1103/RevModPhys.81.1301} {\bibfield  {journal} {\bibinfo
  {journal} {Rev. Mod. Phys.}\ }\textbf {\bibinfo {volume} {81}},\ \bibinfo
  {pages} {1301} (\bibinfo {year} {2009})}\BibitemShut {NoStop}%
\bibitem [{\citenamefont {Ac{\'\i}n}\ and\ \citenamefont
  {Masanes}(2016)}]{acin2016certified}%
  \BibitemOpen
  \bibfield  {author} {\bibinfo {author} {\bibfnamefont {A.}~\bibnamefont
  {Ac{\'\i}n}}\ and\ \bibinfo {author} {\bibfnamefont {L.}~\bibnamefont
  {Masanes}},\ }\bibfield  {title} {\bibinfo {title} {Certified randomness in
  quantum physics},\ }\href {https://doi.org/10.1038/nature20119} {\bibfield
  {journal} {\bibinfo  {journal} {Nature}\ }\textbf {\bibinfo {volume} {540}},\
  \bibinfo {pages} {213} (\bibinfo {year} {2016})}\BibitemShut {NoStop}%
\bibitem [{\citenamefont {Fehr}\ \emph {et~al.}(2013)\citenamefont {Fehr},
  \citenamefont {Gelles},\ and\ \citenamefont {Schaffner}}]{fehr2013security}%
  \BibitemOpen
  \bibfield  {author} {\bibinfo {author} {\bibfnamefont {S.}~\bibnamefont
  {Fehr}}, \bibinfo {author} {\bibfnamefont {R.}~\bibnamefont {Gelles}},\ and\
  \bibinfo {author} {\bibfnamefont {C.}~\bibnamefont {Schaffner}},\ }\bibfield
  {title} {\bibinfo {title} {Security and composability of randomness expansion
  from {B}ell inequalities},\ }\href
  {https://doi.org/10.1103/PhysRevA.87.012335} {\bibfield  {journal} {\bibinfo
  {journal} {Phys. Rev. A}\ }\textbf {\bibinfo {volume} {87}},\ \bibinfo
  {pages} {012335} (\bibinfo {year} {2013})}\BibitemShut {NoStop}%
\bibitem [{\citenamefont {Anders}\ and\ \citenamefont
  {Browne}(2009)}]{Anders2009}%
  \BibitemOpen
  \bibfield  {author} {\bibinfo {author} {\bibfnamefont {J.}~\bibnamefont
  {Anders}}\ and\ \bibinfo {author} {\bibfnamefont {D.~E.}\ \bibnamefont
  {Browne}},\ }\bibfield  {title} {\bibinfo {title} {Computational {P}ower of
  {C}orrelations},\ }\href {https://doi.org/10.1103/PhysRevLett.102.050502}
  {\bibfield  {journal} {\bibinfo  {journal} {Phys. Rev. Lett.}\ }\textbf
  {\bibinfo {volume} {102}},\ \bibinfo {pages} {050502} (\bibinfo {year}
  {2009})}\BibitemShut {NoStop}%
\bibitem [{\citenamefont {Raussendorf}(2013)}]{raussendorf2013contextuality}%
  \BibitemOpen
  \bibfield  {author} {\bibinfo {author} {\bibfnamefont {R.}~\bibnamefont
  {Raussendorf}},\ }\bibfield  {title} {\bibinfo {title} {Contextuality in
  measurement-based quantum computation},\ }\href
  {https://doi.org/10.1103/PhysRevA.88.022322} {\bibfield  {journal} {\bibinfo
  {journal} {Phys. Rev. A}\ }\textbf {\bibinfo {volume} {88}},\ \bibinfo
  {pages} {022322} (\bibinfo {year} {2013})}\BibitemShut {NoStop}%
\bibitem [{\citenamefont {Howard}\ \emph {et~al.}(2014)\citenamefont {Howard},
  \citenamefont {Wallman}, \citenamefont {Veitch},\ and\ \citenamefont
  {Emerson}}]{howard2014contextuality}%
  \BibitemOpen
  \bibfield  {author} {\bibinfo {author} {\bibfnamefont {M.}~\bibnamefont
  {Howard}}, \bibinfo {author} {\bibfnamefont {J.}~\bibnamefont {Wallman}},
  \bibinfo {author} {\bibfnamefont {V.}~\bibnamefont {Veitch}},\ and\ \bibinfo
  {author} {\bibfnamefont {J.}~\bibnamefont {Emerson}},\ }\bibfield  {title}
  {\bibinfo {title} {Contextuality supplies the {`magic'} for quantum
  computation},\ }\href {https://doi.org/10.1038/nature13460} {\bibfield
  {journal} {\bibinfo  {journal} {Nature}\ }\textbf {\bibinfo {volume} {510}},\
  \bibinfo {pages} {351} (\bibinfo {year} {2014})}\BibitemShut {NoStop}%
\bibitem [{\citenamefont {Bermejo-Vega}\ \emph {et~al.}(2017)\citenamefont
  {Bermejo-Vega}, \citenamefont {Delfosse}, \citenamefont {Browne},
  \citenamefont {Okay},\ and\ \citenamefont
  {Raussendorf}}]{bermejo2017contextuality}%
  \BibitemOpen
  \bibfield  {author} {\bibinfo {author} {\bibfnamefont {J.}~\bibnamefont
  {Bermejo-Vega}}, \bibinfo {author} {\bibfnamefont {N.}~\bibnamefont
  {Delfosse}}, \bibinfo {author} {\bibfnamefont {D.~E.}\ \bibnamefont
  {Browne}}, \bibinfo {author} {\bibfnamefont {C.}~\bibnamefont {Okay}},\ and\
  \bibinfo {author} {\bibfnamefont {R.}~\bibnamefont {Raussendorf}},\
  }\bibfield  {title} {\bibinfo {title} {Contextuality as a resource for
  {M}odels of {Q}uantum {C}omputation with {Q}ubits},\ }\href
  {https://doi.org/10.1103/PhysRevLett.119.120505} {\bibfield  {journal}
  {\bibinfo  {journal} {Phys. Rev. Lett.}\ }\textbf {\bibinfo {volume} {119}},\
  \bibinfo {pages} {120505} (\bibinfo {year} {2017})}\BibitemShut {NoStop}%
\bibitem [{\citenamefont {Raussendorf}\ \emph {et~al.}(2017)\citenamefont
  {Raussendorf}, \citenamefont {Browne}, \citenamefont {Delfosse},
  \citenamefont {Okay},\ and\ \citenamefont
  {Bermejo-Vega}}]{raussendorf2017contextuality}%
  \BibitemOpen
  \bibfield  {author} {\bibinfo {author} {\bibfnamefont {R.}~\bibnamefont
  {Raussendorf}}, \bibinfo {author} {\bibfnamefont {D.~E.}\ \bibnamefont
  {Browne}}, \bibinfo {author} {\bibfnamefont {N.}~\bibnamefont {Delfosse}},
  \bibinfo {author} {\bibfnamefont {C.}~\bibnamefont {Okay}},\ and\ \bibinfo
  {author} {\bibfnamefont {J.}~\bibnamefont {Bermejo-Vega}},\ }\bibfield
  {title} {\bibinfo {title} {Contextuality and {W}igner-function negativity in
  qubit quantum computation},\ }\href
  {https://doi.org/10.1103/PhysRevA.95.052334} {\bibfield  {journal} {\bibinfo
  {journal} {Phys. Rev. A}\ }\textbf {\bibinfo {volume} {95}},\ \bibinfo
  {pages} {052334} (\bibinfo {year} {2017})}\BibitemShut {NoStop}%
\bibitem [{\citenamefont {Abramsky}\ \emph {et~al.}(2017)\citenamefont
  {Abramsky}, \citenamefont {Barbosa},\ and\ \citenamefont
  {Mansfield}}]{abramsky2017contextual}%
  \BibitemOpen
  \bibfield  {author} {\bibinfo {author} {\bibfnamefont {S.}~\bibnamefont
  {Abramsky}}, \bibinfo {author} {\bibfnamefont {R.~S.}\ \bibnamefont
  {Barbosa}},\ and\ \bibinfo {author} {\bibfnamefont {S.}~\bibnamefont
  {Mansfield}},\ }\bibfield  {title} {\bibinfo {title} {Contextual {F}raction
  as a {M}easure of {C}ontextuality},\ }\href
  {https://doi.org/10.1103/PhysRevLett.119.050504} {\bibfield  {journal}
  {\bibinfo  {journal} {Phys. Rev. Lett.}\ }\textbf {\bibinfo {volume} {119}},\
  \bibinfo {pages} {050504} (\bibinfo {year} {2017})}\BibitemShut {NoStop}%
\bibitem [{\citenamefont {Gao}\ \emph {et~al.}()\citenamefont {Gao},
  \citenamefont {Anschuetz}, \citenamefont {Wang}, \citenamefont {Cirac},\ and\
  \citenamefont {Lukin}}]{quantumML}%
  \BibitemOpen
  \bibfield  {author} {\bibinfo {author} {\bibfnamefont {X.}~\bibnamefont
  {Gao}}, \bibinfo {author} {\bibfnamefont {E.~R.}\ \bibnamefont {Anschuetz}},
  \bibinfo {author} {\bibfnamefont {S.-T.}\ \bibnamefont {Wang}}, \bibinfo
  {author} {\bibfnamefont {J.~I.}\ \bibnamefont {Cirac}},\ and\ \bibinfo
  {author} {\bibfnamefont {M.~D.}\ \bibnamefont {Lukin}},\ }\href@noop {}
  {\bibinfo {title} {Enhancing generative models via quantum correlations}},\
  \Eprint {https://arxiv.org/abs/arXiv:2101.08354} {arXiv:2101.08354}
  \BibitemShut {NoStop}%
\bibitem [{\citenamefont {Jonathan}\ and\ \citenamefont
  {Plenio}(1999)}]{catalysisofentanglement}%
  \BibitemOpen
  \bibfield  {author} {\bibinfo {author} {\bibfnamefont {D.}~\bibnamefont
  {Jonathan}}\ and\ \bibinfo {author} {\bibfnamefont {M.~B.}\ \bibnamefont
  {Plenio}},\ }\bibfield  {title} {\bibinfo {title} {Entanglement-{A}ssisted
  {L}ocal {M}anipulation of {P}ure {Q}uantum {S}tates},\ }\href
  {https://doi.org/10.1103/physrevlett.83.3566} {\bibfield  {journal} {\bibinfo
   {journal} {Phys. Rev. Lett.}\ }\textbf {\bibinfo {volume} {83}},\ \bibinfo
  {pages} {3566} (\bibinfo {year} {1999})}\BibitemShut {NoStop}%
\bibitem [{\citenamefont {M{\'{e}}thot}\ and\ \citenamefont
  {Scarani}(2007)}]{anomalyofnon-locality}%
  \BibitemOpen
  \bibfield  {author} {\bibinfo {author} {\bibfnamefont {A.~A.}\ \bibnamefont
  {M{\'{e}}thot}}\ and\ \bibinfo {author} {\bibfnamefont {V.}~\bibnamefont
  {Scarani}},\ }\bibfield  {title} {\bibinfo {title} {An anomaly of
  non-locality},\ }\href {https://doi.org/10.26421/QIC7.1-2-10} {\bibfield
  {journal} {\bibinfo  {journal} {Quantum Inf. Comput.}\ }\textbf {\bibinfo
  {volume} {7}},\ \bibinfo {pages} {157} (\bibinfo {year} {2007})}\BibitemShut
  {NoStop}%
\bibitem [{\citenamefont {Chitambar}\ and\ \citenamefont
  {Gour}(2019)}]{chitambar2019resource}%
  \BibitemOpen
  \bibfield  {author} {\bibinfo {author} {\bibfnamefont {E.}~\bibnamefont
  {Chitambar}}\ and\ \bibinfo {author} {\bibfnamefont {G.}~\bibnamefont
  {Gour}},\ }\bibfield  {title} {\bibinfo {title} {Quantum resource theories},\
  }\href {https://doi.org/10.1103/RevModPhys.91.025001} {\bibfield  {journal}
  {\bibinfo  {journal} {Rev. Mod. Phys}\ }\textbf {\bibinfo {volume} {91}},\
  \bibinfo {pages} {025001} (\bibinfo {year} {2019})}\BibitemShut {NoStop}%
\bibitem [{\citenamefont {Coecke}\ \emph {et~al.}(2016)\citenamefont {Coecke},
  \citenamefont {Fritz},\ and\ \citenamefont
  {Spekkens}}]{coecke2016mathematical}%
  \BibitemOpen
  \bibfield  {author} {\bibinfo {author} {\bibfnamefont {B.}~\bibnamefont
  {Coecke}}, \bibinfo {author} {\bibfnamefont {T.}~\bibnamefont {Fritz}},\ and\
  \bibinfo {author} {\bibfnamefont {R.~W.}\ \bibnamefont {Spekkens}},\
  }\bibfield  {title} {\bibinfo {title} {A mathematical theory of resources},\
  }\href {https://doi.org/10.1016/j.ic.2016.02.008} {\bibfield  {journal}
  {\bibinfo  {journal} {Inf. Comp.}\ }\textbf {\bibinfo {volume} {250}},\
  \bibinfo {pages} {59} (\bibinfo {year} {2016})}\BibitemShut {NoStop}%
\bibitem [{\citenamefont {Fritz}(2017)}]{fritz2017resource}%
  \BibitemOpen
  \bibfield  {author} {\bibinfo {author} {\bibfnamefont {T.}~\bibnamefont
  {Fritz}},\ }\bibfield  {title} {\bibinfo {title} {Resource convertibility and
  ordered commutative monoids},\ }\href
  {https://doi.org/10.1017/S0960129515000444} {\bibfield  {journal} {\bibinfo
  {journal} {Math. Struct. Comput. Sci.}\ }\textbf {\bibinfo {volume} {27}},\
  \bibinfo {pages} {850} (\bibinfo {year} {2017})}\BibitemShut {NoStop}%
\bibitem [{\citenamefont {Popescu}\ and\ \citenamefont
  {Rohrlich}(1994)}]{popescu1994quantum}%
  \BibitemOpen
  \bibfield  {author} {\bibinfo {author} {\bibfnamefont {S.}~\bibnamefont
  {Popescu}}\ and\ \bibinfo {author} {\bibfnamefont {D.}~\bibnamefont
  {Rohrlich}},\ }\bibfield  {title} {\bibinfo {title} {Quantum nonlocality as
  an axiom},\ }\href {https://doi.org/10.1007/BF02058098} {\bibfield  {journal}
  {\bibinfo  {journal} {Found. Phys.}\ }\textbf {\bibinfo {volume} {24}},\
  \bibinfo {pages} {379} (\bibinfo {year} {1994})}\BibitemShut {NoStop}%
\bibitem [{\citenamefont {Barrett}\ \emph {et~al.}(2005)\citenamefont
  {Barrett}, \citenamefont {Linden}, \citenamefont {Massar}, \citenamefont
  {Pironio}, \citenamefont {Popescu},\ and\ \citenamefont
  {Roberts}}]{barrettetal2005}%
  \BibitemOpen
  \bibfield  {author} {\bibinfo {author} {\bibfnamefont {J.}~\bibnamefont
  {Barrett}}, \bibinfo {author} {\bibfnamefont {N.}~\bibnamefont {Linden}},
  \bibinfo {author} {\bibfnamefont {S.}~\bibnamefont {Massar}}, \bibinfo
  {author} {\bibfnamefont {S.}~\bibnamefont {Pironio}}, \bibinfo {author}
  {\bibfnamefont {S.}~\bibnamefont {Popescu}},\ and\ \bibinfo {author}
  {\bibfnamefont {D.}~\bibnamefont {Roberts}},\ }\bibfield  {title} {\bibinfo
  {title} {Nonlocal correlations as an information-theoretic resource},\ }\href
  {https://doi.org/10.1103/PhysRevA.71.022101} {\bibfield  {journal} {\bibinfo
  {journal} {Phys. Rev. A}\ }\textbf {\bibinfo {volume} {71}},\ \bibinfo
  {pages} {022101} (\bibinfo {year} {2005})}\BibitemShut {NoStop}%
\bibitem [{\citenamefont {Barrett}\ and\ \citenamefont
  {Pironio}(2005)}]{barrettpironio2005}%
  \BibitemOpen
  \bibfield  {author} {\bibinfo {author} {\bibfnamefont {J.}~\bibnamefont
  {Barrett}}\ and\ \bibinfo {author} {\bibfnamefont {S.}~\bibnamefont
  {Pironio}},\ }\bibfield  {title} {\bibinfo {title} {{P}opescu-{R}ohrlich
  {C}orrelations as a {U}nit of {N}onlocality},\ }\href
  {https://doi.org/10.1103/PhysRevLett.95.140401} {\bibfield  {journal}
  {\bibinfo  {journal} {Phys. Rev. Lett.}\ }\textbf {\bibinfo {volume} {95}},\
  \bibinfo {pages} {140401} (\bibinfo {year} {2005})}\BibitemShut {NoStop}%
\bibitem [{\citenamefont {Allcock}\ \emph {et~al.}(2009)\citenamefont
  {Allcock}, \citenamefont {Brunner}, \citenamefont {Linden}, \citenamefont
  {Popescu}, \citenamefont {Skrzypczyk},\ and\ \citenamefont
  {V{\'e}rtesi}}]{allcock2009closedsets}%
  \BibitemOpen
  \bibfield  {author} {\bibinfo {author} {\bibfnamefont {J.}~\bibnamefont
  {Allcock}}, \bibinfo {author} {\bibfnamefont {N.}~\bibnamefont {Brunner}},
  \bibinfo {author} {\bibfnamefont {N.}~\bibnamefont {Linden}}, \bibinfo
  {author} {\bibfnamefont {S.}~\bibnamefont {Popescu}}, \bibinfo {author}
  {\bibfnamefont {P.}~\bibnamefont {Skrzypczyk}},\ and\ \bibinfo {author}
  {\bibfnamefont {T.}~\bibnamefont {V{\'e}rtesi}},\ }\bibfield  {title}
  {\bibinfo {title} {Closed sets of nonlocal correlations},\ }\href
  {https://doi.org/10.1103/PhysRevA.80.062107} {\bibfield  {journal} {\bibinfo
  {journal} {Phys. Rev. A}\ }\textbf {\bibinfo {volume} {80}},\ \bibinfo
  {pages} {062107} (\bibinfo {year} {2009})}\BibitemShut {NoStop}%
\bibitem [{\citenamefont {Jones}\ and\ \citenamefont
  {Masanes}(2005)}]{jonesmasanes2005interconversions}%
  \BibitemOpen
  \bibfield  {author} {\bibinfo {author} {\bibfnamefont {N.~S.}\ \bibnamefont
  {Jones}}\ and\ \bibinfo {author} {\bibfnamefont {L.}~\bibnamefont
  {Masanes}},\ }\bibfield  {title} {\bibinfo {title} {Interconversion of
  nonlocal correlations},\ }\href {https://doi.org/10.1103/PhysRevA.72.052312}
  {\bibfield  {journal} {\bibinfo  {journal} {Phys. Rev. A}\ }\textbf {\bibinfo
  {volume} {72}},\ \bibinfo {pages} {052312} (\bibinfo {year}
  {2005})}\BibitemShut {NoStop}%
\bibitem [{\citenamefont {Dupuis}\ \emph {et~al.}(2007)\citenamefont {Dupuis},
  \citenamefont {Gisin}, \citenamefont {Hasidim}, \citenamefont {M{\'e}thot},\
  and\ \citenamefont {Pilpel}}]{dupuis2007nouniversalbox}%
  \BibitemOpen
  \bibfield  {author} {\bibinfo {author} {\bibfnamefont {F.}~\bibnamefont
  {Dupuis}}, \bibinfo {author} {\bibfnamefont {N.}~\bibnamefont {Gisin}},
  \bibinfo {author} {\bibfnamefont {A.}~\bibnamefont {Hasidim}}, \bibinfo
  {author} {\bibfnamefont {A.~A.}\ \bibnamefont {M{\'e}thot}},\ and\ \bibinfo
  {author} {\bibfnamefont {H.}~\bibnamefont {Pilpel}},\ }\bibfield  {title}
  {\bibinfo {title} {No nonlocal box is universal},\ }\href
  {https://doi.org/10.1063/1.2767538} {\bibfield  {journal} {\bibinfo
  {journal} {J. Math. Phys.}\ }\textbf {\bibinfo {volume} {48}},\ \bibinfo
  {pages} {082107} (\bibinfo {year} {2007})}\BibitemShut {NoStop}%
\bibitem [{\citenamefont {De~Vicente}(2014)}]{vicente:nonlocality}%
  \BibitemOpen
  \bibfield  {author} {\bibinfo {author} {\bibfnamefont {J.~I.}\ \bibnamefont
  {De~Vicente}},\ }\bibfield  {title} {\bibinfo {title} {On nonlocality as a
  resource theory and nonlocality measures},\ }\href
  {https://doi.org/10.1088/1751-8113/47/42/424017} {\bibfield  {journal}
  {\bibinfo  {journal} {J. Phys. A}\ }\textbf {\bibinfo {volume} {47}},\
  \bibinfo {pages} {424017} (\bibinfo {year} {2014})}\BibitemShut {NoStop}%
\bibitem [{\citenamefont {Forster}\ and\ \citenamefont
  {Wolf}(2011)}]{forsterwolf2011bipartite}%
  \BibitemOpen
  \bibfield  {author} {\bibinfo {author} {\bibfnamefont {M.}~\bibnamefont
  {Forster}}\ and\ \bibinfo {author} {\bibfnamefont {S.}~\bibnamefont {Wolf}},\
  }\bibfield  {title} {\bibinfo {title} {Bipartite units of nonlocality},\
  }\href {https://doi.org/10.1103/PhysRevA.84.042112} {\bibfield  {journal}
  {\bibinfo  {journal} {Phys. Rev. A}\ }\textbf {\bibinfo {volume} {84}},\
  \bibinfo {pages} {042112} (\bibinfo {year} {2011})}\BibitemShut {NoStop}%
\bibitem [{\citenamefont {Gallego}\ \emph {et~al.}(2012)\citenamefont
  {Gallego}, \citenamefont {W{\"u}rflinger}, \citenamefont {Ac{\'\i}n},\ and\
  \citenamefont {Navascu{\'e}s}}]{gallego2012operational}%
  \BibitemOpen
  \bibfield  {author} {\bibinfo {author} {\bibfnamefont {R.}~\bibnamefont
  {Gallego}}, \bibinfo {author} {\bibfnamefont {L.~E.}\ \bibnamefont
  {W{\"u}rflinger}}, \bibinfo {author} {\bibfnamefont {A.}~\bibnamefont
  {Ac{\'\i}n}},\ and\ \bibinfo {author} {\bibfnamefont {M.}~\bibnamefont
  {Navascu{\'e}s}},\ }\bibfield  {title} {\bibinfo {title} {Operational
  {F}ramework for {N}onlocality},\ }\href
  {https://doi.org/10.1103/PhysRevLett.109.070401} {\bibfield  {journal}
  {\bibinfo  {journal} {Phys. Rev. Lett.}\ }\textbf {\bibinfo {volume} {109}},\
  \bibinfo {pages} {070401} (\bibinfo {year} {2012})}\BibitemShut {NoStop}%
\bibitem [{\citenamefont {Gallego}\ and\ \citenamefont
  {Aolita}(2017)}]{Gallego:wirings}%
  \BibitemOpen
  \bibfield  {author} {\bibinfo {author} {\bibfnamefont {R.}~\bibnamefont
  {Gallego}}\ and\ \bibinfo {author} {\bibfnamefont {L.}~\bibnamefont
  {Aolita}},\ }\bibfield  {title} {\bibinfo {title} {Nonlocality free wirings
  and the distinguishability between {B}ell boxes},\ }\href
  {https://doi.org/10.1103/PhysRevA.95.032118} {\bibfield  {journal} {\bibinfo
  {journal} {Phys. Rev. A}\ }\textbf {\bibinfo {volume} {95}},\ \bibinfo
  {pages} {032118} (\bibinfo {year} {2017})}\BibitemShut {NoStop}%
\bibitem [{\citenamefont {Wolfe}\ \emph {et~al.}(2020)\citenamefont {Wolfe},
  \citenamefont {Schmid}, \citenamefont {Sainz}, \citenamefont {Kunjwal},\ and\
  \citenamefont {Spekkens}}]{Wolfeetal:quantifyingbell}%
  \BibitemOpen
  \bibfield  {author} {\bibinfo {author} {\bibfnamefont {E.}~\bibnamefont
  {Wolfe}}, \bibinfo {author} {\bibfnamefont {D.}~\bibnamefont {Schmid}},
  \bibinfo {author} {\bibfnamefont {A.~B.}\ \bibnamefont {Sainz}}, \bibinfo
  {author} {\bibfnamefont {R.}~\bibnamefont {Kunjwal}},\ and\ \bibinfo {author}
  {\bibfnamefont {R.~W.}\ \bibnamefont {Spekkens}},\ }\bibfield  {title}
  {\bibinfo {title} {Quantifying {B}ell: the {R}esource {T}heory of
  {N}onclassicality of {C}ommon-{C}ause {B}oxes},\ }\href
  {https://doi.org/10.22331/q-2020-06-08-280} {\bibfield  {journal} {\bibinfo
  {journal} {{Quantum}}\ }\textbf {\bibinfo {volume} {4}},\ \bibinfo {pages}
  {280} (\bibinfo {year} {2020})}\BibitemShut {NoStop}%
\bibitem [{\citenamefont {Abramsky}\ and\ \citenamefont
  {Brandenburger}(2011)}]{ab}%
  \BibitemOpen
  \bibfield  {author} {\bibinfo {author} {\bibfnamefont {S.}~\bibnamefont
  {Abramsky}}\ and\ \bibinfo {author} {\bibfnamefont {A.}~\bibnamefont
  {Brandenburger}},\ }\bibfield  {title} {\bibinfo {title} {The sheaf-theoretic
  structure of non-locality and contextuality},\ }\href
  {https://doi.org/10.1088/1367-2630/13/11/113036} {\bibfield  {journal}
  {\bibinfo  {journal} {New J. Phys.}\ }\textbf {\bibinfo {volume} {13}},\
  \bibinfo {pages} {113036} (\bibinfo {year} {2011})}\BibitemShut {NoStop}%
\bibitem [{\citenamefont {Abramsky}\ \emph {et~al.}(2019)\citenamefont
  {Abramsky}, \citenamefont {Barbosa}, \citenamefont {Karvonen},\ and\
  \citenamefont {Mansfield}}]{comonadicview}%
  \BibitemOpen
  \bibfield  {author} {\bibinfo {author} {\bibfnamefont {S.}~\bibnamefont
  {Abramsky}}, \bibinfo {author} {\bibfnamefont {R.~S.}\ \bibnamefont
  {Barbosa}}, \bibinfo {author} {\bibfnamefont {M.}~\bibnamefont {Karvonen}},\
  and\ \bibinfo {author} {\bibfnamefont {S.}~\bibnamefont {Mansfield}},\
  }\bibfield  {title} {\bibinfo {title} {A comonadic view of simulation and
  quantum resources},\ }in\ \href {https://doi.org/10.1109/LICS.2019.8785677}
  {\emph {\bibinfo {booktitle} {Proceedings of 34th Annual ACM/IEEE Symposium
  on Logic in Computer Science (LiCS 2019)}}}\ (\bibinfo {organization}
  {IEEE},\ \bibinfo {year} {2019})\ pp.\ \bibinfo {pages} {1--12}\BibitemShut
  {NoStop}%
\bibitem [{\citenamefont {Broadbent}\ and\ \citenamefont
  {Karvonen}()}]{crypto}%
  \BibitemOpen
  \bibfield  {author} {\bibinfo {author} {\bibfnamefont {A.}~\bibnamefont
  {Broadbent}}\ and\ \bibinfo {author} {\bibfnamefont {M.}~\bibnamefont
  {Karvonen}},\ }\href@noop {} {\bibinfo {title} {Categorical composable
  cryptography}},\ \Eprint {https://arxiv.org/abs/arXiv:2105.05949}
  {arXiv:2105.05949} \BibitemShut {NoStop}%
\bibitem [{Note1()}]{Note1}%
  \BibitemOpen
  \bibinfo {note} {This is in contrast to asymptotic questions~\cite
  [V.B]{chitambar2019resource} or distillation tasks~\cite
  {Brunner:distillation} where one tries to approximate a target resource using
  increasing numbers of copies of the starting resource. It is unclear if the
  notion of a catalyst makes sense in such settings, which is why we work with
  single-shot convertibility.}\BibitemShut {Stop}%
\bibitem [{\citenamefont {Spekkens}(2005)}]{spekkens2005contextuality}%
  \BibitemOpen
  \bibfield  {author} {\bibinfo {author} {\bibfnamefont {R.~W.}\ \bibnamefont
  {Spekkens}},\ }\bibfield  {title} {\bibinfo {title} {Contextuality for
  preparations, transformations, and unsharp measurements},\ }\href
  {https://doi.org/10.1103/PhysRevA.71.052108} {\bibfield  {journal} {\bibinfo
  {journal} {Phys. Rev. A}\ }\textbf {\bibinfo {volume} {71}},\ \bibinfo
  {pages} {052108} (\bibinfo {year} {2005})}\BibitemShut {NoStop}%
\bibitem [{\citenamefont {Dzhafarov}\ and\ \citenamefont
  {Kujala}(2014)}]{ehtibar2014contextuality}%
  \BibitemOpen
  \bibfield  {author} {\bibinfo {author} {\bibfnamefont {E.~N.}\ \bibnamefont
  {Dzhafarov}}\ and\ \bibinfo {author} {\bibfnamefont {J.~V.}\ \bibnamefont
  {Kujala}},\ }\bibfield  {title} {\bibinfo {title} {Contextuality is about
  identity of random variables},\ }\href
  {https://doi.org/10.1088/0031-8949/2014/t163/014009} {\bibfield  {journal}
  {\bibinfo  {journal} {Phys. Scr.}\ }\textbf {\bibinfo {volume} {2014}},\
  \bibinfo {pages} {014009} (\bibinfo {year} {2014})}\BibitemShut {NoStop}%
\bibitem [{\citenamefont {Cabello}\ \emph {et~al.}(2014)\citenamefont
  {Cabello}, \citenamefont {Severini},\ and\ \citenamefont
  {Winter}}]{csw2014graphtheoretic}%
  \BibitemOpen
  \bibfield  {author} {\bibinfo {author} {\bibfnamefont {A.}~\bibnamefont
  {Cabello}}, \bibinfo {author} {\bibfnamefont {S.}~\bibnamefont {Severini}},\
  and\ \bibinfo {author} {\bibfnamefont {A.}~\bibnamefont {Winter}},\
  }\bibfield  {title} {\bibinfo {title} {Graph-{T}heoretic {A}pproach to
  {Q}uantum {C}orrelations},\ }\href
  {https://doi.org/10.1103/PhysRevLett.112.040401} {\bibfield  {journal}
  {\bibinfo  {journal} {Phys. Rev. Lett.}\ }\textbf {\bibinfo {volume} {112}},\
  \bibinfo {pages} {040401} (\bibinfo {year} {2014})}\BibitemShut {NoStop}%
\bibitem [{\citenamefont {Ac{\'\i}n}\ \emph {et~al.}(2015)\citenamefont
  {Ac{\'\i}n}, \citenamefont {Fritz}, \citenamefont {Leverrier},\ and\
  \citenamefont {Sainz}}]{acin2015combinatorial}%
  \BibitemOpen
  \bibfield  {author} {\bibinfo {author} {\bibfnamefont {A.}~\bibnamefont
  {Ac{\'\i}n}}, \bibinfo {author} {\bibfnamefont {T.}~\bibnamefont {Fritz}},
  \bibinfo {author} {\bibfnamefont {A.}~\bibnamefont {Leverrier}},\ and\
  \bibinfo {author} {\bibfnamefont {A.~B.}\ \bibnamefont {Sainz}},\ }\bibfield
  {title} {\bibinfo {title} {A combinatorial approach to nonlocality and
  contextuality},\ }\href {https://doi.org/10.1007/s00220-014-2260-1}
  {\bibfield  {journal} {\bibinfo  {journal} {Commun. Math. Phys.}\ }\textbf
  {\bibinfo {volume} {334}},\ \bibinfo {pages} {533} (\bibinfo {year}
  {2015})}\BibitemShut {NoStop}%
\bibitem [{\citenamefont {Staton}\ and\ \citenamefont
  {Uijlen}(2015)}]{sander2015effect}%
  \BibitemOpen
  \bibfield  {author} {\bibinfo {author} {\bibfnamefont {S.}~\bibnamefont
  {Staton}}\ and\ \bibinfo {author} {\bibfnamefont {S.}~\bibnamefont
  {Uijlen}},\ }\bibfield  {title} {\bibinfo {title} {Effect algebras,
  presheaves, non-locality and contextuality},\ }in\ \href
  {https://doi.org/10.1007/978-3-662-47666-6_32} {\emph {\bibinfo {booktitle}
  {Proceedings of 42nd International Colloquium on Automata, Languages, and
  Programming (ICALP 2015)}}},\ \bibinfo {series and number} {Lecture Notes in
  Computer Science},\ \bibinfo {editor} {edited by\ \bibinfo {editor}
  {\bibfnamefont {M.~M.}\ \bibnamefont {Halld{\'o}rsson}}, \bibinfo {editor}
  {\bibfnamefont {K.}~\bibnamefont {Iwama}}, \bibinfo {editor} {\bibfnamefont
  {N.}~\bibnamefont {Kobayashi}},\ and\ \bibinfo {editor} {\bibfnamefont
  {B.}~\bibnamefont {Speckmann}}}\ (\bibinfo  {publisher} {Springer},\ \bibinfo
  {year} {2015})\ pp.\ \bibinfo {pages} {401--413}\BibitemShut {NoStop}%
\bibitem [{\citenamefont {Joshi}\ \emph {et~al.}(2013)\citenamefont {Joshi},
  \citenamefont {Grudka}, \citenamefont {Horodecki}, \citenamefont {Horodecki},
  \citenamefont {Horodecki},\ and\ \citenamefont
  {Horodecki}}]{joshi2011nobroadcasting}%
  \BibitemOpen
  \bibfield  {author} {\bibinfo {author} {\bibfnamefont {P.}~\bibnamefont
  {Joshi}}, \bibinfo {author} {\bibfnamefont {A.}~\bibnamefont {Grudka}},
  \bibinfo {author} {\bibfnamefont {K.}~\bibnamefont {Horodecki}}, \bibinfo
  {author} {\bibfnamefont {M.}~\bibnamefont {Horodecki}}, \bibinfo {author}
  {\bibfnamefont {P.}~\bibnamefont {Horodecki}},\ and\ \bibinfo {author}
  {\bibfnamefont {R.}~\bibnamefont {Horodecki}},\ }\bibfield  {title} {\bibinfo
  {title} {No-broadcasting of non-signalling boxes via operations which
  transform local boxes into local ones},\ }\href
  {https://doi.org/10.26421/QIC13.7-8-2} {\bibfield  {journal} {\bibinfo
  {journal} {Quantum Inf. Comput.}\ }\textbf {\bibinfo {volume} {13}},\
  \bibinfo {pages} {0567} (\bibinfo {year} {2013})}\BibitemShut {NoStop}%
\bibitem [{\citenamefont {Bennett}\ \emph {et~al.}(1999)\citenamefont
  {Bennett}, \citenamefont {DiVincenzo}, \citenamefont {Fuchs}, \citenamefont
  {Mor}, \citenamefont {Rains}, \citenamefont {Shor}, \citenamefont {Smolin},\
  and\ \citenamefont {Wootters}}]{Bennetetal:loccvsaxiomatic}%
  \BibitemOpen
  \bibfield  {author} {\bibinfo {author} {\bibfnamefont {C.~H.}\ \bibnamefont
  {Bennett}}, \bibinfo {author} {\bibfnamefont {D.~P.}\ \bibnamefont
  {DiVincenzo}}, \bibinfo {author} {\bibfnamefont {C.~A.}\ \bibnamefont
  {Fuchs}}, \bibinfo {author} {\bibfnamefont {T.}~\bibnamefont {Mor}}, \bibinfo
  {author} {\bibfnamefont {E.}~\bibnamefont {Rains}}, \bibinfo {author}
  {\bibfnamefont {P.~W.}\ \bibnamefont {Shor}}, \bibinfo {author}
  {\bibfnamefont {J.~A.}\ \bibnamefont {Smolin}},\ and\ \bibinfo {author}
  {\bibfnamefont {W.~K.}\ \bibnamefont {Wootters}},\ }\bibfield  {title}
  {\bibinfo {title} {Quantum nonlocality without entanglement},\ }\href
  {https://doi.org/10.1103/PhysRevA.59.1070} {\bibfield  {journal} {\bibinfo
  {journal} {Phys. Rev. A}\ }\textbf {\bibinfo {volume} {59}},\ \bibinfo
  {pages} {1070} (\bibinfo {year} {1999})}\BibitemShut {NoStop}%
\bibitem [{\citenamefont {Heimendahl}\ \emph {et~al.}()\citenamefont
  {Heimendahl}, \citenamefont {Heinrich},\ and\ \citenamefont
  {Gross}}]{Heimendahl:axiomaitvsoperationalmagic}%
  \BibitemOpen
  \bibfield  {author} {\bibinfo {author} {\bibfnamefont {A.}~\bibnamefont
  {Heimendahl}}, \bibinfo {author} {\bibfnamefont {M.}~\bibnamefont
  {Heinrich}},\ and\ \bibinfo {author} {\bibfnamefont {D.}~\bibnamefont
  {Gross}},\ }\href@noop {} {\bibinfo {title} {The axiomatic and the
  operational approaches to resource theories of magic do not coincide}},\
  \Eprint {https://arxiv.org/abs/arXiv:2011.11651} {arXiv:2011.11651}
  \BibitemShut {NoStop}%
\bibitem [{\citenamefont {Barbosa}\ \emph {et~al.}()\citenamefont {Barbosa},
  \citenamefont {Karvonen},\ and\ \citenamefont {Mansfield}}]{closingbell}%
  \BibitemOpen
  \bibfield  {author} {\bibinfo {author} {\bibfnamefont {R.~S.}\ \bibnamefont
  {Barbosa}}, \bibinfo {author} {\bibfnamefont {M.}~\bibnamefont {Karvonen}},\
  and\ \bibinfo {author} {\bibfnamefont {S.}~\bibnamefont {Mansfield}},\
  }\href@noop {} {\bibinfo {title} {Closing {B}ell: {B}oxing black box
  simulations in the resource theory of contextuality}},\ \Eprint
  {https://arxiv.org/abs/arXiv:2104.11241} {arXiv:2104.11241} \BibitemShut
  {NoStop}%
\bibitem [{\citenamefont {Schmid}\ \emph {et~al.}()\citenamefont {Schmid},
  \citenamefont {Fraser}, \citenamefont {Kunjwal}, \citenamefont {Sainz},
  \citenamefont {Wolfe},\ and\ \citenamefont {Spekkens}}]{schmid:losr}%
  \BibitemOpen
  \bibfield  {author} {\bibinfo {author} {\bibfnamefont {D.}~\bibnamefont
  {Schmid}}, \bibinfo {author} {\bibfnamefont {T.~C.}\ \bibnamefont {Fraser}},
  \bibinfo {author} {\bibfnamefont {R.}~\bibnamefont {Kunjwal}}, \bibinfo
  {author} {\bibfnamefont {A.~B.}\ \bibnamefont {Sainz}}, \bibinfo {author}
  {\bibfnamefont {E.}~\bibnamefont {Wolfe}},\ and\ \bibinfo {author}
  {\bibfnamefont {R.~W.}\ \bibnamefont {Spekkens}},\ }\href@noop {} {\bibinfo
  {title} {Understanding the interplay of entanglement and nonlocality:
  {M}otivating and developing a new branch of entanglement theory}},\ \Eprint
  {https://arxiv.org/abs/arXiv:2004.09194} {arXiv:2004.09194} \BibitemShut
  {NoStop}%
\bibitem [{\citenamefont {Sengupta}\ \emph {et~al.}()\citenamefont {Sengupta},
  \citenamefont {Zibakhsh}, \citenamefont {Chitambar},\ and\ \citenamefont
  {Gour}}]{Senguptaetal:Nonlocalityisentanglement}%
  \BibitemOpen
  \bibfield  {author} {\bibinfo {author} {\bibfnamefont {K.}~\bibnamefont
  {Sengupta}}, \bibinfo {author} {\bibfnamefont {R.}~\bibnamefont {Zibakhsh}},
  \bibinfo {author} {\bibfnamefont {E.}~\bibnamefont {Chitambar}},\ and\
  \bibinfo {author} {\bibfnamefont {G.}~\bibnamefont {Gour}},\ }\href@noop {}
  {\bibinfo {title} {Quantum {B}ell nonlocality is entanglement}},\ \Eprint
  {https://arxiv.org/abs/arXiv:2012.06918} {arXiv:2012.06918} \BibitemShut
  {NoStop}%
\bibitem [{\citenamefont {Liang}\ \emph {et~al.}(2011)\citenamefont {Liang},
  \citenamefont {Spekkens},\ and\ \citenamefont {Wiseman}}]{Spekkerstriangle}%
  \BibitemOpen
  \bibfield  {author} {\bibinfo {author} {\bibfnamefont {Y.-C.}\ \bibnamefont
  {Liang}}, \bibinfo {author} {\bibfnamefont {R.~W.}\ \bibnamefont
  {Spekkens}},\ and\ \bibinfo {author} {\bibfnamefont {H.~M.}\ \bibnamefont
  {Wiseman}},\ }\bibfield  {title} {\bibinfo {title} {Specker's parable of the
  overprotective seer: A road to contextuality, nonlocality and
  complementarity},\ }\href {https://doi.org/10.1016/j.physrep.2011.05.001}
  {\bibfield  {journal} {\bibinfo  {journal} {Phys. Rep.}\ }\textbf {\bibinfo
  {volume} {506}},\ \bibinfo {pages} {1} (\bibinfo {year} {2011})}\BibitemShut
  {NoStop}%
\bibitem [{\citenamefont {Clauser}\ \emph {et~al.}(1969)\citenamefont
  {Clauser}, \citenamefont {Horne}, \citenamefont {Shimony},\ and\
  \citenamefont {Holt}}]{CHSH}%
  \BibitemOpen
  \bibfield  {author} {\bibinfo {author} {\bibfnamefont {J.~F.}\ \bibnamefont
  {Clauser}}, \bibinfo {author} {\bibfnamefont {M.~A.}\ \bibnamefont {Horne}},
  \bibinfo {author} {\bibfnamefont {A.}~\bibnamefont {Shimony}},\ and\ \bibinfo
  {author} {\bibfnamefont {R.~A.}\ \bibnamefont {Holt}},\ }\bibfield  {title}
  {\bibinfo {title} {Proposed {E}xperiment to {T}est {L}ocal
  {H}idden-{V}ariable {T}heories},\ }\href
  {https://doi.org/10.1103/physrevlett.23.880} {\bibfield  {journal} {\bibinfo
  {journal} {Phys. Rev. Lett.}\ }\textbf {\bibinfo {volume} {23}},\ \bibinfo
  {pages} {880} (\bibinfo {year} {1969})}\BibitemShut {NoStop}%
\bibitem [{\citenamefont {Amaral}\ \emph {et~al.}(2018)\citenamefont {Amaral},
  \citenamefont {Cabello}, \citenamefont {Cunha},\ and\ \citenamefont
  {Aolita}}]{amaral2017noncontextual}%
  \BibitemOpen
  \bibfield  {author} {\bibinfo {author} {\bibfnamefont {B.}~\bibnamefont
  {Amaral}}, \bibinfo {author} {\bibfnamefont {A.}~\bibnamefont {Cabello}},
  \bibinfo {author} {\bibfnamefont {M.~T.}\ \bibnamefont {Cunha}},\ and\
  \bibinfo {author} {\bibfnamefont {L.}~\bibnamefont {Aolita}},\ }\bibfield
  {title} {\bibinfo {title} {Noncontextual {W}irings},\ }\href
  {https://doi.org/10.1103/PhysRevLett.120.130403} {\bibfield  {journal}
  {\bibinfo  {journal} {Phys. Rev. Lett.}\ }\textbf {\bibinfo {volume} {120}},\
  \bibinfo {pages} {130403} (\bibinfo {year} {2018})}\BibitemShut {NoStop}%
\bibitem [{\citenamefont {Amaral}(2019)}]{amaral2019resource}%
  \BibitemOpen
  \bibfield  {author} {\bibinfo {author} {\bibfnamefont {B.}~\bibnamefont
  {Amaral}},\ }\bibfield  {title} {\bibinfo {title} {Resource theory of
  contextuality},\ }\href {https://doi.org/10.1098/rsta.2019.0010} {\bibfield
  {journal} {\bibinfo  {journal} {Philos. Trans. R. Soc. A}\ }\textbf {\bibinfo
  {volume} {377}},\ \bibinfo {pages} {20190010} (\bibinfo {year}
  {2019})}\BibitemShut {NoStop}%
\bibitem [{\citenamefont {Karvonen}(2019)}]{karvonen2018categories}%
  \BibitemOpen
  \bibfield  {author} {\bibinfo {author} {\bibfnamefont {M.}~\bibnamefont
  {Karvonen}},\ }\bibfield  {title} {\bibinfo {title} {Categories of empirical
  models},\ }in\ \href {https://doi.org/10.4204/EPTCS.287.14} {\emph {\bibinfo
  {booktitle} {Proceedings of 15th International Conference on Quantum Physics
  and Logic (QPL 2018)}}},\ \bibinfo {series} {Electronic Proceedings in
  Theoretical Computer Science}, Vol.\ \bibinfo {volume} {287},\ \bibinfo
  {editor} {edited by\ \bibinfo {editor} {\bibfnamefont {P.}~\bibnamefont
  {Selinger}}\ and\ \bibinfo {editor} {\bibfnamefont {G.}~\bibnamefont
  {Chiribella}}}\ (\bibinfo {year} {2019})\ pp.\ \bibinfo {pages}
  {239--252}\BibitemShut {NoStop}%
\bibitem [{Note2()}]{Note2}%
  \BibitemOpen
  \bibinfo {note} {The abstract explanation is that the $\protect \mathsf {MP}$
  comonad is comonoidal as observed in~\cite [Theorem 17]{comonadicview}.
  Roughly speaking this boils down to the properties of the transformations
  $\protect \mathsf {MP}(S\otimes T)\to \protect \mathsf {MP}(S)\otimes
  \protect \mathsf {MP}(T)$ which interpret measurement protocols in $S$ or in
  $T$ as measurement protocols over $S\otimes T$.}\BibitemShut {Stop}%
\bibitem [{Note3()}]{Note3}%
  \BibitemOpen
  \bibinfo {note} {See Supplemental Material at \protect \url
  {http://link.aps.org/supplemental/10.1103/PhysRevLett.127.160402} or at the
  end of this document for formal development of background material and full
  proofs.}\BibitemShut {Stop}%
\bibitem [{\citenamefont {Heunen}\ \emph {et~al.}(2014)\citenamefont {Heunen},
  \citenamefont {Fritz},\ and\ \citenamefont
  {Reyes}}]{jointmeasurabilitygraphs}%
  \BibitemOpen
  \bibfield  {author} {\bibinfo {author} {\bibfnamefont {C.}~\bibnamefont
  {Heunen}}, \bibinfo {author} {\bibfnamefont {T.}~\bibnamefont {Fritz}},\ and\
  \bibinfo {author} {\bibfnamefont {M.~L.}\ \bibnamefont {Reyes}},\ }\bibfield
  {title} {\bibinfo {title} {Quantum theory realizes all joint measurability
  graphs},\ }\href {https://doi.org/10.1103/physreva.89.032121} {\bibfield
  {journal} {\bibinfo  {journal} {Phys. Rev. A}\ }\textbf {\bibinfo {volume}
  {89}},\ \bibinfo {pages} {032121} (\bibinfo {year} {2014})}\BibitemShut
  {NoStop}%
\bibitem [{\citenamefont {Kunjwal}\ \emph {et~al.}(2014)\citenamefont
  {Kunjwal}, \citenamefont {Heunen},\ and\ \citenamefont
  {Fritz}}]{jointmeasurabilitystructures}%
  \BibitemOpen
  \bibfield  {author} {\bibinfo {author} {\bibfnamefont {R.}~\bibnamefont
  {Kunjwal}}, \bibinfo {author} {\bibfnamefont {C.}~\bibnamefont {Heunen}},\
  and\ \bibinfo {author} {\bibfnamefont {T.}~\bibnamefont {Fritz}},\ }\bibfield
   {title} {\bibinfo {title} {Quantum realization of arbitrary joint
  measurability structures},\ }\href
  {https://doi.org/10.1103/physreva.89.052126} {\bibfield  {journal} {\bibinfo
  {journal} {Phys. Rev. A}\ }\textbf {\bibinfo {volume} {89}},\ \bibinfo
  {pages} {052126} (\bibinfo {year} {2014})}\BibitemShut {NoStop}%
\bibitem [{\citenamefont {Brunner}\ and\ \citenamefont
  {Skrzypczyk}(2009)}]{Brunner:distillation}%
  \BibitemOpen
  \bibfield  {author} {\bibinfo {author} {\bibfnamefont {N.}~\bibnamefont
  {Brunner}}\ and\ \bibinfo {author} {\bibfnamefont {P.}~\bibnamefont
  {Skrzypczyk}},\ }\bibfield  {title} {\bibinfo {title} {Nonlocality
  {D}istillation and {P}ostquantum {T}heories with {T}rivial {C}ommunication
  {C}omplexity},\ }\href {https://doi.org/10.1103/physrevlett.102.160403}
  {\bibfield  {journal} {\bibinfo  {journal} {Phys. Rev. Lett.}\ }\textbf
  {\bibinfo {volume} {102}},\ \bibinfo {pages} {160403} (\bibinfo {year}
  {2009})}\BibitemShut {NoStop}%
\end{thebibliography}%

\newpage~\newpage

\title{Neither Contextuality nor Nonlocality Admits Catalysts---Supplemental Material}
\affiliation{University of Ottawa, Canada}

\maketitle

\appendix

In this Supplemental Material we provide proofs of our main theorems and some technical details concerning the background that is needed in order to formalize these.

\section{Background}\label{app:background}

We mostly follow the development of~\cite{comonadicview} with some changes in notation. For a scenario $S$ and subset $Y\subset{X_S}$ of measurements, we let $\Ev_S(Y)$ (or $\Ev(Y)$ when $S$ is clear from context) denote the set of possible joint outcomes for the (not necessarily comeasurable) set $Y$. Formally, $\Ev_S(Y)$ is defined as the cartesian product $\prod_{x\in Y}O_{S,x}$.

We now define empirical models more carefully. Given a probability distribution $d$ on $\Ev_S(Z)$, which we think of as a joint probability distribution for measurements in $Z$,  and a subset $Y\subset Z$, we denote by $d|_Y$ the marginalization of $d$ to measurements in $Y$. Now, an empirical model $e: S$ consists of a  family $(e_\sigma)_{\sigma\in\Sigma_S}$, where each $e_\sigma$ is a probability distribution over the set $\Ev_S(\sigma)$, and whenever $\tau\subset\sigma\in\Sigma_S$ we have
	\[e_\sigma|_\tau=e_\tau\]

The \emph{parallel composite} $S\otimes T$ of $S$ and $T$ is defined by $X_{S\otimes T}=X_S\sqcup X_T$, $\Sigma_{S\otimes T}=\{\sigma\sqcup \tau|\sigma\in\Sigma_S,\tau\in\Sigma_T\}$ and setting the outcome set at $x\in X_S\sqcup X_T$ to be $O_{S,x}$ if $x\in X_S$ and $O_{T,x}$ if $x\in X_T$. Here $X\sqcup Y$ denotes the disjoint union of the sets $X$ and $Y$, and can be defined \eg as $X\sqcup Y\defeq X\times\{0\}\cup Y\times\{1\}$.

A deterministic procedure $S\to T$ consists of a
	\begin{itemize}
		\item A simplicial function $\pi\colon X_T\to X_S$, i.e., a  function $\pi$ satisfying $\pi(\sigma)\in\Sigma_S$ for every $\sigma\in\Sigma_T$
		\item A family $\alpha=(\alpha_x\colon O_{S,\pi(x)}\to O_{T,x})_{x\in X}$ of functions between outcome sets.
	\end{itemize}
A procedure $\tuple{\pi,\alpha}\colon S\to T$ induces a mapping between empirical models so that each $e: S$ is pushed forward to $\tuple{\pi,\alpha}_*e: T$. Mathematically, the empirical model $\tuple{\pi,\alpha}_*e$ is defined  at measurement $\sigma\in \Sigma_T$ by
	\[(\tuple{\pi,\alpha}_*e)_\sigma=\alpha_*(e_{\pi\sigma})\]
where the right hand side denotes the pushforward of the probability distribution $e_{\pi\sigma}$ along the function $\Ev_S(\pi(\sigma))\to \Ev_T(\sigma)$ whose $x$th coordinate projection is given by $\Ev_S(\pi(\sigma))\to\Ev_S(\pi(x))\xrightarrow{\alpha_x}\Ev_T(x)$. 

If one imagines $S$ as corresponding to some particular experimental setup, and a model $e: S$ as corresponding to empirically observed propensities of outcomes for a fixed state preparation, then one can interpret a procedure $\tuple{\pi,\alpha}\colon S\to T$ as a way of (deterministically) building the experimental setup $T$ out of $S$: the map $\pi$ tells what measurement in $S$ each measurement in $T$ corresponds to, and $\alpha$ tells how to interpret outcomes in $S$ as outcomes in $T$. Then the model $\tuple{\pi,\alpha}_*e$ describes the statistics one would see if one was to observe the statistics given by $e$ but transform them according to $\tuple{\pi,\alpha}$. In other words, the probability $\tuple{\pi,\alpha}_*e$ gives to some fixed outcome over the joint measurement $\sigma\in\Sigma_T$ is the sum the probabilities $e$ gives to all outcomes of $\pi(\sigma)$ that are mapped to $s$ by $\alpha$.

A deterministic simulation $d\to e$ where $d:S$ and $e:T$ is a deterministic procedure $\tuple{\pi,\alpha}\colon S\to T$ that transforms $d$ to $e$, i.e., that satisfies $\tuple{\pi,\alpha}_*e=d$.

We now formalize measurement protocols carefully. These were used in~\cite[Appendix D]{acin2015combinatorial} to relate the sheaf-theoretic approach~\cite{ab} to contextuality to the hypergraph approach~\cite{acin2015combinatorial}, and then later used in~\cite{comonadicview} to extend the former. Intuitively, a (deterministic) measurement protocol is a set of rules that tells at each stage, what to measure next given the previous measurements and their outcomes. At any stage of the measurement protocol, we thus have a sequence $(x_i,o_i)_{i=1}^n$ of measurement-outcome pairs. A technical insight of~\cite{comonadicview} is that measurement protocols can be defined in terms of sets of such sequences: if you know all such sequences that could happen during a deterministic protocol, you also know the protocol itself. 

\begin{definition}
    A \emph{run} on a measurement scenario $S$ is a sequence $\x\defeq(x_i,o_i)_{i=1}^n$ such that $x_i\in X_S$ are distinct, $\enset{x_1,\ldots,x_n}\in\Sigma_S$, and each $o_i\in O_{S,x_i}$. We denote the empty run by $\Lambda$.
\end{definition}
    A run $\x$ determines a context
    $\sigma_{\x}\defeq\enset{x_1,\ldots, x_n} \in \Sigma_S$ and a joint assignment $s_{\x}\in \Ev(\sigma_{\x})$ on that context that maps $x_i$ to $o_i$. 
    Two runs $\x$ and $\y$ are said to be \emph{consistent} if they agree on common measurements, i.e.,  for every $z\in \sigma_{\x}\cap \sigma_{\y}$ we have $s_{\x}(z)=s_{\y}(z)$.
    
Given runs $\x$ and $\y$, we denote their concatenation by $\x\cdot\y$. Note that $\x\cdot\y$ might not be a run.

\begin{definition}
    A \emph{measurement protocol} on $S$ is a non-empty set $Q$ of runs satisfying the following  conditions:
    \begin{enumerate}[(i)]
        \item if $\x\cdot \y\in Q$ then $\x\in Q$;
        \item\label{cond:allpossibleoutcomes} if $\x\cdot (x,o)\in Q$, then $\x\cdot (x,o')\in Q$ for every $o'\in O_{S,x}$;
        \item\label{cond:determinacy} if $\x\cdot (x,o)\in Q$ and $\x\cdot (x',o')\in Q$, then $x=x'$. 
    \end{enumerate}
\end{definition}

Identifying a measurement protocol with its set of possible runs, the first condition guarantees that a prefix of a possible run is a possible run. The second condition ensures that, if $x$ might be measured at some stage, then any outcome of $x$ is an (in-principle) possible outcome at that stage, and the third condition guarantees that at any given stage, the next measurement prescribed by the protocol (if any), is uniquely determined.  

In this viewpoint, we may identify an outcome of a measurement protocol $Q$ with a \emph{maximal run} of $Q$, i.e., a run $\x\in Q$ that is not a proper prefix of any $\y\in Q$.  Such a run describes a sequence of measurement-outcome pairs obtained during $Q$, and maximality of the run ensures that the measurement protocol was followed to its conclusion. 

\begin{definition}
Given a scenario $S$, we build a scenario $\MP(S)$ as follows:
   \begin{itemize}
       \item its set of measurements is the set  $X_{\MP(S)}$ of measurement protocols on $S$;
       \item the outcome set $O_Q$ of a measurement protocol $Q \in X_{\MP(S)}$ is its set of maximal runs, i.e., those $\x\in Q$ that are not a proper prefix of any $\y\in Q$;
       \item a set $\enset{Q_1,\ldots Q_n}$ of measurement protocols is compatible
       whenever for any choice of pairwise consistent runs
       $\x_i \in Q_i$ with $i \in \enset{1,\ldots,n}$,
       we have $\bigcup_i \sigma_{\x_i} \in \Sigma$.
   \end{itemize}

    Given an empirical model $e:S$, we define the empirical model $\MP(e) : \MP(S)$ as follows.
    For a compatible set $\sigma\defeq\enset{Q_1,\ldots, Q_n}$ of measurement protocols and an assignment $\fdef{s}{Q_i}{\x_i} \in \Ev(\sigma)$, we set
\[\MP(e)_\sigma(s)\defeq\begin{cases} e_{\bigcup_{i} \sigma_{\x_i}}(\cup_i s_{\x_i}) &\text{ if $\enset{\x_i}$ pairwise consistent}\\
    								0 &\text{ otherwise. }\end{cases}\]
\end{definition}

The intuition behind a set of measurement protocols being compatible is that they can be performed together, without ever having to perform a measurement not allowed by $\Sigma_S$. We next make this intuition precise. 

\begin{definition} We say that a measurement protocol $P$ \emph{contains implicitly} a protocol $Q$, if any outcome of $P$ determines the outcome of $Q$, and we denote this by $P\succeq Q$. Formally, $P\succeq Q$ if 
for any maximal run $\x\in P$ there is a (necessarily unique) maximal run $\y\in Q$ such that the assignment $s_{\y}$ determined by $\y$ is a restriction of $s_{\x}$. We say that a measurement protocol $P$ contains implicitly a set of protocols $\{Q_1,\dots Q_n\}$ if $P\succeq Q_i$ for each $i$. 
\end{definition}

\begin{lemma}\label{lem:combiningMPs}
If $\{Q_1,\dots Q_n\}$ is a compatible set of measurement protocols, there is a measurement protocol $P$ that implicitly contains them all.
\end{lemma}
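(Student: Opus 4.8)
The plan is to build $P$ by running all of $Q_1,\ldots,Q_n$ ``in parallel'' in a fixed order, performing each physical measurement only once and sharing its outcome among all protocols that query it. Concretely, I would describe $P$ as a decision tree (equivalently, a prefix-closed set of runs): at any stage $P$ has accumulated a run $\x\in P$, with associated partial assignment $s_{\x}$ recording the outcome of every measurement performed so far. To decide the next move, scan $i=1,\ldots,n$ and, for each $i$, trace the unique path through $Q_i$ dictated by $s_{\x}$, i.e.\ follow $Q_i$'s queries, reading off each queried outcome from $s_{\x}$ whenever it is already defined (this is legitimate by outcome-completeness (ii) and determinacy (iii) of $Q_i$). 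Let $i$ be the least index for which this trace reaches a measurement $x\notin\mathrm{dom}(s_{\x})$ before $Q_i$ terminates; declare $x$ to be the next measurement of $P$, and branch on all of its outcomes. If no such $i$ exists---every $Q_i$ already reaches a maximal run using only the outcomes recorded in $s_{\x}$---then $\x$ is declared maximal and $P$ stops.

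Next I would check that this prescription really defines a measurement protocol. Prefix-closure (i) is built in; outcome-completeness (ii) holds because whenever $P$ performs a genuine measurement $x$ it branches over all of $O_{S,x}$; and determinacy (iii) holds because the next measurement is specified by a deterministic rule (least unfinished $Q_i$, together with its next query). Since $X_S$ is finite and each step enlarges $\mathrm{dom}(s_{\x})$ by one genuinely new measurement, the process terminates, so every run extends to a maximal one. The one nontrivial axiom is that each run of $P$ is a legitimate run of $S$, i.e.\ that its measurement set lies in $\Sigma_S$. As $\Sigma_S$ is downward closed, it suffices to verify this for a maximal run $\x$. By construction $\sigma_{\x}=\bigcup_i\sigma_{\y_i}$, where $\y_i$ is the maximal run of $Q_i$ obtained by tracing $Q_i$ along $s_{\x}$; the $\y_i$ are pairwise consistent because each $s_{\y_i}$ is a restriction of the single assignment $s_{\x}$. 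Compatibility of $\enset{Q_1,\ldots,Q_n}$ then gives $\bigcup_i\sigma_{\y_i}\in\Sigma_S$, which is exactly $\sigma_{\x}\in\Sigma_S$. This is the crucial point, and the only place where the hypothesis is used.

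Finally I would verify $P\succeq Q_i$ for each $i$. Given any maximal run $\x\in P$, the trace of $Q_i$ along $s_{\x}$ produces the run $\y_i\in Q_i$ above; it is maximal in $Q_i$ because $\x$ being maximal in $P$ means every $Q_i$ has finished, and it is the unique run with $s_{\y_i}\subseteq s_{\x}$ by determinacy of $Q_i$. Hence $s_{\y_i}$ is a restriction of $s_{\x}$, which is precisely the defining condition of $P\succeq Q_i$.

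I expect the main obstacle to be purely presentational: setting up the interleaving rule rigorously as a set of runs, and checking that reusing previously obtained outcomes (rather than re-measuring) keeps every run well formed (distinct measurements, valid context). All the genuine content is concentrated in the step where compatibility converts ``pairwise consistent sub-runs'' into ``the joint context lies in $\Sigma_S$''; everything else is bookkeeping. A cleaner-looking but essentially equivalent alternative would be an induction on $n$, combining $Q_n$ with a protocol $P'$ already containing $Q_1,\ldots,Q_{n-1}$, but this forces one to re-establish compatibility of $\enset{P',Q_n}$ separately, so I would prefer the direct simultaneous construction.
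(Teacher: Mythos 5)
Your proof is correct and takes essentially the same approach as the paper: the paper builds the identical protocol---run $Q_1$ to completion, then whatever remains of $Q_2$, and so on, reusing already-recorded outcomes---formalized as the prefix closure of the merges $\x_1 * \dots * \x_n$ of maximal runs $\x_i\in Q_i$, and it uses compatibility at exactly the same point you do, to guarantee that the combined context $\bigcup_i\sigma_{\x_i}$ lies in $\Sigma_S$. Your least-index scan rule is just a stage-by-stage description of the same construction, with the protocol axioms and the consistency-to-compatibility step spelled out somewhat more explicitly than in the paper's proof.
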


\begin{proof}
We define a protocol $P$ that first performs $Q_1$, then (whatever is left of) $Q_2$ and so on. Formally, we first define inductively a merge operation $*$ for compatible runs:
	\begin{align*}
	    \x*\Lambda &\defeq x
	    \\
	    \x*((y,o)\cdot \y) &\defeq \begin{cases} \x*\y &\text{ if }y\in \sigma_{\x} \\
										(\x\cdot (y,o))*\y &\text{ otherwise.}
						   \end{cases}
						   	\end{align*}
We extend $*$ to all pairs of runs by setting $\x*\y=\Lambda$ whenever $\x$ and $\y$ are not compatible. 

We then define $P$ by taking the closure of the set
\[\setdef{\x_1 * \dots * \x_n}{\x_i\in O_{\MP(S),Q_i}}\]
under prefixes. Now property (i) is true by construction, and properties (ii) and (iii) follow from each $Q_i$ being a measurement protocol Finally, each outcome of $P$ is of the form $\x_1 * \dots * \x_n$, so that $P\succeq Q_i$ for each $Q_i$.
\end{proof}

\begin{remark}\label{rem:productMPs}
If $P$ is a measurement protocol over $S$ and we fix the outcomes of some subset $Y\subset X_S$ of measurements to equal $t\in\Ev(Y)$, this determines a measurement protocol $P(t)$ that does not measure anything in $Y$ as follows: it proceeds exactly as $P$ except that whenever $P$ is supposed to measure some $x\in Y$ it behaves as if $t(x)$ was observed and proceeds accordingly. In particular, if $P$ is a measurement protocol over $S\otimes T$ and $t$ is an element of $\Ev_T(X_T)$, then $P(t)$ is a measurement protocol over $S$.

We can describe $P(t)$ more formally as follows. Let us say that a run $\x\in P$ is compatible with $t\in\Ev(Y)$ if $s_{\x}$ agrees with $t$ on common measurements, i.e., if $s_{\x}(z)=t(z)$ for every $z\in \sigma_{\x}\cap Y$. Given a run $\x$ and $t\in\Ev(Y)$, we define $\x\setminus t$ to be the run that omits measurements in $Y$. Formally, we can define this inductively on $\x$ by setting
	\begin{align*}
	    \Lambda\setminus t=\Lambda
	    \\
	    (\x\cdot(y,o))\setminus t &\defeq \begin{cases} \x\setminus t &\text{ if }y\in Y \\
										(\x\setminus t)\cdot (y,o)) &\text{ otherwise.}
						   \end{cases}
						   	\end{align*}
Then we can define $P(t)$ by
	\[P(t)\defeq\{\x\setminus t|\x\in P\text{ is compatible with }t\}\]
\end{remark}

\section{Proofs}\label{app:proofs}
\subsection{Proof of Theorem~\ref{thm:nocatalysisforcontextuality}}\label{app:contextuality}

Consider models $d:S$, $e:T$ and $f:U$ and an adaptive simulation $d\otimes e\to d\otimes f$. We wish to produce a simulation $e\to f$. Heuristically speaking, our overall strategy is to show that the composite $d\otimes e\to d\otimes f\to f$ needs to use $d$ only in a noncontextual manner. More precisely, we show that a compatible set of measurement protocols over $S$ suffices to carry out a simulation $d\otimes e\to f$. There are several compatible sets one could use for this: we choose one that works both for the resource theory of contextuality and for nonlocality.  

We first reduce to the deterministic case: by definition, there is a deterministic (but adaptive) map $d\otimes e\otimes c\to d\otimes f$ with $c$ noncontextual. Let us set $g\defeq e\otimes c$ and denote the scenario of $e\otimes c$ by $R$. We then have a deterministic map $d\otimes g\to d\otimes f$. Now, if $d$ can catalyze the transformation $g\to f$ once without getting spent in the process, it can do so arbitrarily many times by first catalyzing the first copy of $g$ to $f$, then the second one and so on. This results in a deterministic simulation $d\otimes g^{\otimes n}\to d\otimes f^{\otimes n}$ for any $n$ with the property that simulating the $i$th copy of $f$ uses only $d$ and the $i$th copy of $g$.

As the composite $d\otimes g^{\otimes n}\to d\otimes f^{\otimes n}\to f$ to the $i$th copy of $f$ uses only one copy of $g$ it hence induces a simulation $\tuple{\pi_i,\alpha_i}\colon d\otimes g\to f$. As there are only finitely many deterministic procedures $\MP(S\otimes R)\to U$, we can force as many of the simulations $\tuple{\pi_i,\alpha_i}$ to coincide by choosing large enough $n$. In particular, there is a single simulation that can work as many times as needed. More precisely, there is some fixed deterministic adaptive simulation $\tuple{\pi,\alpha}\colon d\otimes g\to f$, such that for any $n$ there is an deterministic adaptive simulation $d\otimes g^{\otimes n}\to d\otimes f^{\otimes n}$ with the property that simulating the  $i$th copy of $f$ uses only $d$ and the $i$th copy of $g$, but otherwise these simulations behave similarly, i.e., according to $\tuple{\pi,\alpha}$.

Consider now a fixed measurement $x$ and the measurement protocol $\pi(x)$ over $S\otimes R$ used to simulate it. Enumerate $\Ev_R(X_R)$ as $t_1,\dots t_k$. Since all the copies of of $x$ in $U^{\otimes n}$ are compatible with each other, so are their images under the simulation $d\otimes g^{\otimes n}\to f^{\otimes n}$. When simulating different copies of $x$ one queries a single copy of $d$ but different copies of $g$, which means that whatever results one gets in $g$ the measurements done in $d$ are always compatible. This implies that 
$\{\pi(x)(t_i)|i=1,\dots k\}$ is a compatible set of measurement protocols, where $\pi(x)(t_i)$ is defined as in Remark~\ref{rem:productMPs}. Thus they can be combined to a single measurement protocol $P_x$ by Lemma~\ref{lem:combiningMPs}. In particular, when simulating the $i$th copy of $x$ we can first measure $P_x$ obtaining an outcome $s$, and then perform $\pi(x)(s)$  (which is a measurement protocol over $R$) in the $i$th copy of $R$. We can then replace $d$ by the restriction $\hat{d}$ of $\MP(d)$ to $\MP(S)|_{\{P_x|x\in X_U\}}$. Moreover, considering distinct measurements $x$ belong to different copies of $f:U$, we see that this set has to be a compatible set of measurement protocols. As the set $\{P_x|x\in X_U\}$ is comeasurable, the model $\hat{d}$ is noncontextual, so that the deterministic simulation $\hat{d}\otimes g=\hat{d}\otimes e\otimes c\to f$ gives a probabilistic simulation $e\to f$ as desired. 
\newline 
\subsection{Proof of Theorem~\ref{thm:nocatalysisfornonlocality}}\label{app:nonlocality}

We now show how the above proof implies that there are no catalysts in the resource theory of nonlocality. Indeed, assume a simulation $(\pi_i,\alpha_i)_{i=1}^n\colon d\boxtimes e\to d\boxtimes f$ in the resource theory of nonlocality. Then $\bigotimes (\pi_i,\alpha_i)$ defines a simulation $d\boxtimes e\to d\boxtimes f$ in the resource theory of contextuality. Pre- and postcomposing by the canonical isomorphisms
\[ (\bigotimes_{i=1}^n S_i)\otimes (\bigotimes_{i=1}^n T_i)\cong\bigotimes_{i=1}^n (S_i\otimes T_i)\]
then implies that we have a simulation $d\otimes e\to d\otimes f$, so that the proof gives us a simulation $(\pi',\alpha')\colon e\to f$. However, we must check that this map is of the form $\bigotimes (\pi'_i,\alpha'_i)$ in order for it to give a simulation $e\to f$ in the resource theory of nonlocality. That this is true follows by inspecting the preceding proof. In fact, when simulating a measurement $x$ in the $T$  one performs a single measurement protocol in $S$ that captures everything one might need when measuring $x$, and then proceeds to $R$. If $x$ is a measurement at the $i$th measurement site, so is this ``everything'' and the protocol following it in $R$, so that the constructed simulation $e\to f$ is indeed of the required form.

\subsection{Proof of Theorem~\ref{thm:generalnocatalysis}}\label{app:general}

If $\X$ is closed under $\otimes$ and contains all noncontextual models, then $\X$-assisted simulation $d\otimes e\to d\otimes f$ is given by a deterministic simulation $d\otimes e\otimes c\to d\otimes f$ where $c\in \X$. We can then set $g\defeq e\otimes c$ to obtain a deterministic simulation $d\otimes g\to d\otimes f$ and retrace the proof of Theorem~\ref{thm:nocatalysisforcontextuality} to obtain a deterministic simulation $\hat{d}\otimes g=\hat{d}\otimes e\otimes c\to f$. As $\hat{d}$ is noncontextual, our assumptions on $\X$ imply that $\hat{d}\otimes c\in \X$, so that there is an $\X$-assisted simulation $e\to f$. 

The proof of Theorem~\ref{thm:nocatalysisfornonlocality} similarly gives a proof of no-catalysis for the resource theory of $\X$-assisted $n$-partite nonlocality.

\end{document}